\providecommand{\U}[1]{\protect\rule{.1in}{.1in}}
\theoremstyle{plain} { \theorembodyfont{\rmfamily}
	\newtheorem{definition}{Definition}

}
\newtheorem{proposition}{Proposition}
\newtheorem{lemma}[proposition]{Lemma}
\newtheorem{theorem}{Theorem}
\newcommand{\R}{\mathbb{R}}
\newcommand{\E}{\mathbb{E}}
\newcommand{\var}{\text{\rm Var}}
\newcommand{\qed}{\hfill \mbox{\raggedright \rule{.07in}{.1in}}}
\newenvironment{proof}{\vspace{1ex}\noindent{\bf Proof}\hspace{0.5em}}
{\hfill\qed\vspace{1ex}}
\begin{document}
\title{Na\"ive Markowitz Policies\thanks{The first version of the paper was completed in 2017, and part of the results was included in the first author's PhD thesis defended in 2020. The paper was finalized when the second author was on vacation in Las Vegas, a place arguably  ideal for observing the ``na\"ive" behaviors studied in the paper.} }
\author{Lin Chen\thanks{Department of Industrial
Engineering and Operations Research, Columbia University, New York, New York 10027, USA, \texttt{lc3110@columbia.edu}}\ \ \ Xun Yu Zhou\thanks{Department of Industrial Engineering and Operations Research, Columbia University, New York, New York 10027, USA, \texttt{xz2574@columbia.edu}}}
\maketitle

\begin{abstract}
We study a continuous-time Markowitz mean--variance portfolio selection model in which a na\"ive agent, unaware of the underlying  time-inconsistency, continuously reoptimizes over time.
We define the resulting na\"ive policies through the limit of discretely na\"ive policies that are committed only in very small time intervals, and  derive them analytically and  explicitly. We compare na\"ive policies with pre-committed optimal policies and with consistent planners' equilibrium policies in a Black--Scholes market, and find that the former are mean--variance inefficient starting from any given time and wealth, and always take  riskier exposure than equilibrium policies.

\bigskip

\noindent {\bf Key Words.} Continuous time, mean--variance model, time inconsistency, na\"ive agent,
pre-committed agent, consistent planner, equilibrium policies.
\end{abstract}

\section{Introduction}
The Markowitz mean--variance (MV) portfolio selection model (\citealp{MH52} and \citealp{MH59}) is a monumental work in quantitative finance. The model formulates the investment problem as striving to achieve the best balance between return and risk, represented respectively by the mean and variance of the final portfolio worth. Its variants, extensions and implications have been passionately  studied in theory and applied in practice to this day. 

The original MV model is formulated for a static single period and solved by quadratic program. It is natural and necessary to  extend it to the dynamic setting, both in discrete time and continuous time. However,  a dynamic MV model is inherently {\it time inconsistent}; namely, any ``optimal" policy for the present moment  will generally not be optimal for the next moment.\footnote{Here a ``policy" is a {\it plan} that maps any given time and state to an action (a portfolio in the MV model).
It is also called a {\it feedback control  law} in control theory.} This inconsistency comes from the variance term that does not satisfy the tower rule: unlike the mean, there is no consistency over time in evaluating the same variance of the final wealth.
As a result, in sharp contrast to the classical time-consistent models, there is no such notion as a {\it dynamically optimal policy} for a time-inconsistent model because any such policy, once planned for this moment,  may need to be given up quickly (and  {\it instantly} in a continuous-time setting) in favor of a different plan  at the next moment. Technically, time-inconsistency poses fundamental challenges in ``solving" -- whatever ``solving"  means -- the problem because the Bellman optimality principle, which is the very foundation of the classical dynamic programming for studying dynamic optimization problems, is no longer valid.

Economists have recognized and studied time-inconstancy since as early as the 1950s. The foundational  paper \cite{SR56}  {\it describes} three types of agents when facing time inconsistency. Type 1, a ``na\"ivet\'e" (or na\"if), is unaware of the time inconsistency and at any given time and  state of affairs seeks an ``optimal" policy  for that moment only, without knowing that he will not uphold that policy for long. As a result, his policies  change all the times, and the eventual policy that is being actually carried out can be vastly and characteristically different from any of his short-lived ``optimal" policies he originally planned to execute.\footnote{For instance, \cite{casino} shows, in a casino gambling model (which is time-inconsistent in discrete time due to probability weighting),
a na\"ive gambler's initial plan was to gamble as long as possible when winning but to stop if he started accumulating losses, he actually ends up doing
the {\it opposite}: he gambles as long as possible when losing and stops once
he accumulates some gains. Similar behaviors are also observed, and indeed prevalent, in stock investment especially with retail investors.}   The next two types realize the issue of time inconsistency
but act differently. Type 2 is a ``precommitter" who solves the optimization problem only at time 0 and sticks  to the resulting policy throughout (via some ``commitment device" if necessary and available), recognizing  that the original policy may no longer be optimal at later times. Type 3 is a ``consistent planner"  who is unable to precommit and realizes that her future selves may abandon  whatever plans she makes now. Her resolution is to optimize taking the future deviations from the current plan  as {\it constraints}, effectively leading to a game among selves at different times.
The resulting policies are called equalibrium ones.

It is important to note that it is not meaningful to determine
which type is superior than the others, simply because there are no uniform criteria
to compare them. In this sense, the Strotzian approach to time-inconsistency is both {\it normative} (i.e. to advise people about the best course of actions,  especially in Types 2 and 3) and {\it descriptive} (i.e. to describe what people are actually  doing, as more with Type 1). 

Mathematically, model formulations and solutions for deriving the
three types of agent policies call for different treatments as they are very different from each other. The problems are also challenging due to the invalidity of the dynamic programming approach. In the last decade, there have been significant developments in studying time-inconsistent models analytically, mainly in three different settings:
MV portfolio selection, and optimization problems involving  non-exponential
discounting or probability weighting; see \cite{survey} for a recent survey on the related works.  For the MV models, earlier works focused on Type 2, pre-committed agents;
see, e.g., \cite{R89,H71,LN00,ZL00,LIMZ02,BJPZ05,X05,LZ06}, although most of these works did not spell out that their solutions were pre-committed ones. Later research gradually shifted to Type 3, consistent
planners; see, e.g. \cite{BC05, HJZ12,  BA14, BMZ14,HJ17}.

In contrast to the rich literature on pre-committed agents and consistent planners, there are far fewer works on the general behaviors of na\"ive agents, and almost none  in continuous time (not necessarily limited to MV models). \cite{casino, HOZ22} study na\"ive strategies in casino gambling models  which are inherently discrete time. As shown in these papers, finding na\"ive policies in discrete time is rather straightforward technically if the pre-committed polices are already available: at each discrete time point one solves and obtains the corresponding pre-committed policy, holds it until the next time point when one re-solves the pre-committed problem, and repeats these steps until the terminal time. The {\it eventual} na\"ive policy is then just to ``paste" these piece-wise pre-committed policies together. This pasting approach, however, does not work for the continuous-time setting. Indeed, at each given time and state, say $(s,y)$,  a pre-committed policy is executed and {\it instantaneously} discarded, while a policy applied for just one single time--state initial point $(s,y)$ has no impact on the dynamics in continuous time. As a result, it is unclear how to paste these continuously changing policies and, even one found a way to do it, how to interpret the resulting policy.

We address this issue specific to continuous time  and make two main contributions in this paper. First, to our best knowledge we are  the first to define precisely the na\"ive policies in the original sprit of \cite{SR56}  but adapted to the continuous-time setting, premised upon the notion that any continuous-time behavior is the limit of discrete-time behaviors when the time-step approaches zero.\footnote{An analogy here is that the Brownian motion is just the limit of a simple random walk when the step size diminishes  to zero.} We fix a set of discrete time points and consider a fictitious agent who only optimizes at each of these points and holds the resulting pre-committed policy until the next point. It is then natural to use the ``limit" -- in a certain sense -- of these discretely na\"ive agents when the step size becomes asymptotically small to  describe the na\"ive behavior in the original continuous-time model. One technical subtlety here is that policies are generally only  measurable functions whose limit is difficult to analyze.
We consider instead the limiting process of the wealth processes -- which are analytically better behaved -- of those discrete agents, and find the policy that generates this limiting process as the wealth process.
A main advantage of our approach is that it is both general and constructive. It is general because the definition of a na\"ive policy applies readily to any time-inconsistent problems beyond MV (see \citealp{CZ20b} for an extension to the general stochastic linear--quadratic  control problem), and it is constructive because the definition itself points to the direction of {\it deriving}  a na\"ive policy.

The second contribution is to compare the na\"ive policies with the other Strotzian types of policies. Be mindful that it does not make much sense to use either mean or variance of the  terminal wealth alone for comparison, as the essence of the MV model is to achieve a best trade-off between the two criteria. Instead, MV efficiency ought to be the primary criterion.
Between a na\"ivet\'e and a pre-committer, starting from any  given point of time and state, the latter is MV efficient by definition while we show that the former is not (although he elevates the expected terminal wealth than he originally planned).
To compare
na\"ive  and equilibrium policies which are both MV inefficient, we use an objective metric which  is the risky weight defined as the fraction of dollar amount invested  in stocks.
We show that a na\"ivet\'e {\it always} allocate strictly higher risky weight than the two types of consistent planners considered by \cite{BMZ14} and \cite{HJ17} respectively. This in turn suggests that the na\"ive policies tend to be more risk-taking than their  consistent planning counterparts.\footnote{An analogous result is proved in \cite{HOZ22} for a casino gambling model: a na\"ive gambler stops gambling no earlier than a gambler doing consistent planning.} 

\cite{PP17} introduce the notion of ``dynamic optimality" in a continuous-time MV model, which seems to bear some relevance to
na\"ive policies (although the paper stops short of commenting on it). Definition 2 therein defines  a dynamically optimal policy  as there being no other policy applied at present
time could produce a more favourable value at the terminal time. However, as discussed earlier, in a time-inconsistent problem there is no such thing as ``dynamic optimality": as much as a na\"ivet\'e attempts to reoptimize continuously over time, the resulting actual policy at any given time may significantly deviate from the pre-committed optimal one (and therefore is MV inefficient, and indeed {\it not} optimal in any sense). On the other hand, \cite{PP17} conjecture the analytical formula of such a ``dynamically optimal" policy for a single stock Black--Scholes market without explaining where it comes from. Hence the solution method is {\it ad hoc} and it is unclear whether the existence of such a policy is prevalent and, if yes, how to extend the conjecture  to  a more general MV setting (e.g., one with more than one risky asset) or to  other time-inconsistent problems (e.g. with non-exponential discounting or probability weighting). By contrast, our definition of na\"ive policies is general and our derivation of these policies is constructive.

The rest of the paper is organized as follows. In Section \ref{Naive_problem_formulation} we formulate the continuous-time MV portfolio selection model. In Section \ref{Naive_analysis} we introduce the so-called $2^{-n}$-committed policies, which are commited only during a small interval of length $2^{-n}$, before reoptimization. We consider the limit of the wealth processes under these policies as $n\to\infty$, and define the policy that generates this limiting wealth process as a na\"ive policy. We then state the main result that expresses na\"ive policies analytically. In Section \ref{Naive_comparisons} we compare na\"ive policies with other types of policies in a Black--Scholes market. Section \ref{Naive_conclusions} concludes the paper. Proofs related to the main result are placed in Appendices.

\section{A Continuous-Time Markowitz Model}\label{Naive_problem_formulation}
In this section we review the continuous-time Markowitz MV model.  We first introduce notations.

Throughout this paper, $M^{\top}$ denotes the transpose of any vector or matrix $M$,
while all vectors are {\it column} vectors unless otherwise specified.
A fixed filtered complete probability space
$(\Omega,\mathcal{F},\mathbb{P},\{\mathcal{F}_t\}_{t\geq 0})$ is given along with a standard $\{\mathcal{F}_t\}_{t\geq 0}$-adapted, $m$-dimensional Brownian motion $W(t)\equiv (W^1(t),...,W^m(t))^{\top}$. We use $f$ or $f(\cdot)$ to denote the {\it function} $f$, and $f(x)$ to denote
the {\it function value} of $f$ at $x$. Likewise, we use $X$ or $X(\cdot)$ to denote a stochastic process $X=\{X_{s},$ $s\geq 0\}$. Given a Hilbert space $H$ and $b>a\geq0$,
we denote by
$L^2([a,b];H)$  the Hilbert space of $H$-valued, square-integrable functions $f$ on $[a,b]$ endowed with the norm $(\int_a^b ||f(t)||^2_H dt)^{1/2}$. Moreover,   we denote by $L^2_\mathcal{F}([a,b];\R^m)$ the Hilbert space  of  $\R^m$-valued,  square-integrable and  $\{\mathcal{F}_t\}_{t\geq 0}$-adapted stochastic processes $g$ endowed with the norm  $\left[\E\int_a^b ||g(t)||^2 dt\right]^{1/2}$, where  $||\cdot||$ is the $L^2$ norm in a Euclidean space.

A financial market has $m+1$ assets being traded continuously. One of the assets is a bank account whose price process $S_0$ is subject to the following equation:
\begin{equation}
dS_0(t)=r(t)S_0(t)dt,\ t\geq0;\ S_0(0)=s_0>0,
\end{equation}
where the interest rate function $r(\cdot)$ is deterministic. The other $m$ assets are stocks whose price processes $S_i,i=1,...,m$, satisfy the following stochastic differential equations (SDEs):
\begin{equation}
dS_i(t)=S_i(t)\left[b_i(t)dt+\sum\limits_{j=1}^m \sigma_{ij}(t)dW^j(t)\right],\ t\geq0;\;\; S_i(0)=s_i>0,
\end{equation}
where $b(\cdot)$ and $\sigma_{ij}(\cdot)$, the appreciation and volatility rates functions respectively, are scalar-valued and deterministic.
Set the excess rate of return vector function and the volatility matrix function respectively as
$$B(t):=(b_1(t)-r(t),...,b_m(t)-r(t))^{\top},\;\;\sigma(t):=(\sigma_{ij}(t))_{m\times m}.$$

An agent 
has total {\it wealth} $X(t)$ at time $t\in[0,T]$, where $T$ is a given terminal time of the investment horizon.  Assuming that the trading of shares takes place in a self-financing fashion and that there are no transaction costs,  the process $X$ satisfies the {\it wealth equation}
\begin{equation}\label{wealtheq}
dX(t)=\left[r(t)X(t)+B(t)^\top\pi(t)\right]dt+\pi(t)^\top\sigma(t)dW(t),\ t\in[0,T],
\end{equation}
where each $\pi_i(t),i=1,2,...,m$, denotes the total market value of the agent's wealth in the $i$-th asset, resulting in a {\it portfolio} $(\pi_1(t),...,\pi_m(t))^\top$, at time $t$. The agent considers portfolio choice at time $s$ when her wealth is $y$, where $(s,y)\in [0,T)\times\R$ is given.
The process $\pi\equiv(\pi_1,...,\pi_m)^\top=\{\pi(t): s\leq t\leq T\}$ is called an {\it admissible portfolio} (process) for $(s,y)$ if $\pi\in L^2_\mathcal{F}([s,T];\R^m)$ and the wealth equation (\ref{wealtheq}) with initial condition $X(s)=y$  admits a unique strong solution. Denote by ${\cal U}_{s,y}$ the set of admissible portfolio processes for $(s,y)$.

We focus on a portfolio {\it policy} $\bm{\pi}=\bm{\pi}(\cdot,\cdot)$ which is a {\it deterministic} map from $[0,T]\times\R$ to $\R^m$. Such a policy specifies a portfolio $\bm{\pi}(t,x)$ when  time is $t$ and wealth is $x$.\footnote{In control theory, the policy here is also called the {\it feedback} control law, whereas the portfolio process corresponds to the {\it open-loop} control.}
In the classical,  time-consistent setting, a policy $\bm{\pi}(\cdot,\cdot)$ is  independent of the initial time--state pair $(s,y)$, meaning that it is implemented no matter when and where one starts. Such policies are called {\it time-consistent} ones. A time-consistent policy $\bm{\pi}=\bm{\pi}(\cdot,\cdot)$ is called admissible if for {\it any} $(s,y)\in [0,T)\times\R$, the following SDE obtained by substituting $\bm{\pi}$ into
the wealth equation (\ref{wealtheq})
\begin{equation}\label{wealtheqfb}
dX(t)=\left[r(t)X(t)+B(t)^\top\bm{\pi}(t,X(t))\right]dt+\bm{\pi}(t,X(t))^\top\sigma(t)dW(t),\ t\in[0,T];\;\; X(s)=y,
\end{equation}
admits a unique strong solution $X$ and, moreover, the resulting portfolio process $\pi\in {\cal U}_{s,y}$ where $\pi(t):=\bm{\pi}(t,X(t))$, $t\in[s,T]$. Note  that the wealth--portfolio {\it process} pair $(X,\pi)$ {\it depends on} the initial  $(s,y)$, and we say $(X,\pi)$ is {\it generated} from the policy $\bm{\pi}$ with respect to $(s,y)$. 

The classical verification theorem for time-consistent problems (e.g. \citealp{YZ99}) dictates that, under standard assumptions, there exists a time-consistent policy that generates optimal wealth--portfolio process pair $(X,\pi)$ for any given initial  $(s,y)$.

\smallskip

The following assumptions are in force throughout this paper.

\textbf{(A1)} $r(t), B(t)$ and $\sigma(t)$ are uniformly bounded on $ [0,T]$.

\textbf{(A2)} $B(t)\neq 0$ a.e.$t\in[0,T]$ and $\sigma(t)\sigma(t)^\top\geq \delta I, \forall t\in [0,T]$
for some $\delta>0$. 

\smallskip

 Given $(s,y)\in [0,T)\times\R$, the Markowitz mean--variance portfolio selection  problem over $[s,T]$ is
\begin{equation}\label{prob1}
\min\limits_{\pi(\cdot)\in {\cal U}_{s,y}}\ \var_{s,y}(X(T))
\end{equation}
\begin{equation}\label{ft0}
\text{subject to}\
\begin{cases}
\E_{s,y}[X(T)]=yf(s,T),\\
(X(\cdot),\pi(\cdot))\ \text{satisfy}\ (\ref{wealtheq}) \mbox{ with } X(s)=y
\end{cases}
\end{equation}
where $\var_{s,y}$ and $\E_{s,y}$ denote respectively the variance and expectation conditional on $\mathcal{F}_s$ and $X(s)=y$, and
$f(u,v), 0\leq u\leq v\leq T$, is a given deterministic real-valued function satisfying $f(u,u) = 1, \forall u\in[0,T]$.  The number $f(u,v)$ represents  the desired  growth factor over the time horizon $[u,v]$. It is economically sensible to consider the expected mean target to be dependent of the initial $(s,y)$, which is equivalent to the state-dependend risk aversion considered in \cite{BMZ14}. \cite{HJ17} consider a more general target $L(s,y)$ instead of $yf(s,T)$; see also Section 4 of this paper.

We add an assumption on $f$ throughout this paper:

\textbf{(A3)}  $f\in C^1([0,T]\times[0,T])$, $f(u,v) \geq e^{\int_u^v r(t) dt}$, $\forall\ 0\leq u \leq v \leq T$, and $-\infty<\frac{\partial f}{\partial t}(t,T)|_{t=T}<\infty$.

\noindent The second part of this assumption is natural, demanding  the target return to be  at least as great as  the risk-free return.

Given $(s,y)$, the relation between $\var_{s,y}(X_*(T))$ and $\E_{s,y}[X_*(T)]$, where $X_*(T)$ is the optimal terminal wealth of the
problem (\ref{prob1}) -- (\ref{ft0}), is called an {\it efficient frontier} with respect to $(s,y)$, which gives the best risk--return tradeoff for future investment when standing at $(s,y)$.

The
problem (\ref{prob1}) -- (\ref{ft0}) has been solved explicitly in literature; see e.g. \cite[Theorem 2.1]{LZ06},\footnote{The previous results such as \cite[Theorem 2.1]{LZ06} are for the case when $(s,y)=(0,x_0)$, but they extend readily to arbitrary initial
$(s,y)$ because the underlying mathematical problem of the latter is the same.}
with the following {\it unique} optimal policy (conditional on $\mathcal{F}_s$ and $X(s)=y$)
\begin{equation}\label{us}
\begin{aligned}
\bm{\pi}_*(t,x;s,y)=-[\sigma(t)\sigma(t)^\top]^{-1}B(t)^\top \left[x-{\gamma}(s,T) e^{-\int_t^T r(v) dv}y\right],\;\;(t,x)\in [s,T)\times \R,
\end{aligned}
\end{equation}
where
\begin{equation}\label{gamma}
{\gamma}(s,T):=
\frac{f(s,T)-e^{\int_s^T [r(v)-\rho(v)] dv}}{1-e^{-\int_s^T \rho(v) dv}},\;\;s\in[0,T),
\end{equation}
with
$$\rho(t):=B(t)[\sigma(t)\sigma(t)^\top]^{-1}B(t)^\top>0.$$
Note that l'H\^ospital's rule along with Assumptions {\bf (A2)-(A3)} yield
that $\gamma(\cdot,T)$ is continuous at $T$; hence is uniformly bounded on  $[0,T]$.

Substituting the policy (\ref{us}) into the wealth equation (\ref{wealtheq}) we obtain that the corresponding optimal wealth process is determined by the following SDE:
\begin{equation}\label{Ws}
\begin{cases}
dX_*(t)=\left[(r(t)-\rho(t))X_*(t)+{\gamma}(s,T)\rho(t) e^{-\int_t^T r(v) dv}y\right]dt \\
\ \ \ -B(t)(\sigma(t)\sigma(t)^\top)^{-1}\sigma(t)\left[X_*(t)-{\gamma}(s,T)e^{-\int_t^T r(v) dv}y\right]dW(t),\;\;t\in[s,T], \\
X(s)=y.
\end{cases}
\end{equation}

Finally, the efficient frontier at $(s,y)$ is
\begin{equation}\label{ef}
\var_{s,y}(X_*(T))=\frac{1}{e^{\int_s^T \rho(v) dv}-1}\left(\E_{s,y}[X_*(T)]
-ye^{\int_s^T r(v) dv}\right)^2.
\end{equation}

In sharp contrast to the time-consistent setting,
the policy $\bm{\pi}_*(\cdot,\cdot;s,y)$ given by (\ref{us}) now depends on the initial pair  $(s,y)$
{\it explicitly}. 
If the agent sticks to this policy during the entire future time period $[s,T]$ without subsequently altering it, then it is the so-called optimal {\it pre-committed} policy. If the agent is na\"ive {\it \`a la} Strotz who reoptimizes at every subsequent time moment, then the policy (\ref{us}) will be abandoned immediately (indeed instantaneously) at any $\tilde s>s$. More precisely, suppose the agent carries out (\ref{us}) for a (little) while and reaches the state $X(\tilde s)$ at time $\tilde s>s$. Now the {\it current} initial time becomes $\tilde s$ and the {\it current} initial state is $X(\tilde s)$. If the agent reoptimizes the problem for the remaining duration $[\tilde s,T]$, then the corresponding policy at $(\tilde s,X(\tilde s))$ is (conditional on $\mathcal{F}_{\tilde s}$)
\begin{equation}\label{ut}
\begin{aligned}
\bm{\pi}_*(t,x;\tilde s,X(\tilde s))=-[\sigma(t)\sigma(t)^\top]^{-1}B(t)^\top \left[x-{\gamma}(\tilde s,T) e^{-\int_t^T r(v) dv} X(\tilde s)\right],\;\;(t,x)\in [\tilde s,T)\times \R.
\end{aligned}
\end{equation}
Clearly, the two policies (\ref{us}) and (\ref{ut}) are generally different
as two {\it functions} on $[\tilde s,T]\times \R$.

So,  problem (\ref{prob1}) -- (\ref{ft0}) admits a policy (\ref{us}) that is optimal for the current $(s,y)$ {\it only}. 
In other words, the pre-committed optimal policy depends inherently on $(s,y)$, which in turn causes the time-inconsistency of the policy and hence that of the problem, as discussed above. A time-inconsistent policy of the type (\ref{us})  is defined  only for the given $(s,y)$. 


\section{Na\"ive Policies}\label{Naive_analysis}


A na\"ivet\`e (``he") always ``reoptimizes"  under current information; as a result he devises policies and then instantly abandons them in the  continuous-time setting. Although at each given time he tries to follow the pre-committed optimal policy (\ref{us}) but his {\it eventual} policy due to the constant changes could be completely different from (\ref{us}). In this section, we define na\"ive policies rigorously, and then derive them in analytical form for the MV problem (\ref{prob1})--(\ref{ft0}).

\subsection{A $2^{-n}$-committed agent}
As discussed earlier, the difficulty of defining and analyzing na\"ive policies lies in the continuous-time setting of the problem. We overcome this difficulty  by introducing an auxiliary agent, named the {\it $2^{-n}$-committed agent}, to approximate the behavior of the na\"ivet\`e.

A $2^{-n}$-committed agent (``she") is one who behaves ``in between" a pre-committer and a na\"ivet\`e. Specifically, she partitions the time horizon $[0,T]$ into $2^n$ equal-length intervals, with the partitioning points being $\{t_k\}_{k = 0}^{2^n}$ where $t_k = \frac{kT}{2^n}$. She first solves problem (\ref{prob1})-(\ref{ft0}) with $(s,y)=(0,x_0)$ to obtain the pre-committed optimal policy
$\pi(\cdot,\cdot;0,x_0)$ defined by (\ref{us}). She implements and commits to this policy until time $t_1$ when her wealth becomes $X(t_1)$, at which she resolves problem (\ref{prob1})-(\ref{ft0})  with $(s,y)=(t_1,X(t_1))$ and switches to  the policy $\pi(\cdot,\cdot;t_1,X(t_1))$. She commits to this new policy  until $t_2$ before changing it to $\pi(\cdot,\cdot;t_2,X(t_2))$. She then repeats these steps  until time $T$. Figure \ref{multimePlot} illustrates the resulting wealth process under this construction.
 \begin{figure}[ptb]
 	\centering
 	\includegraphics[width=6.0in,height=3in]{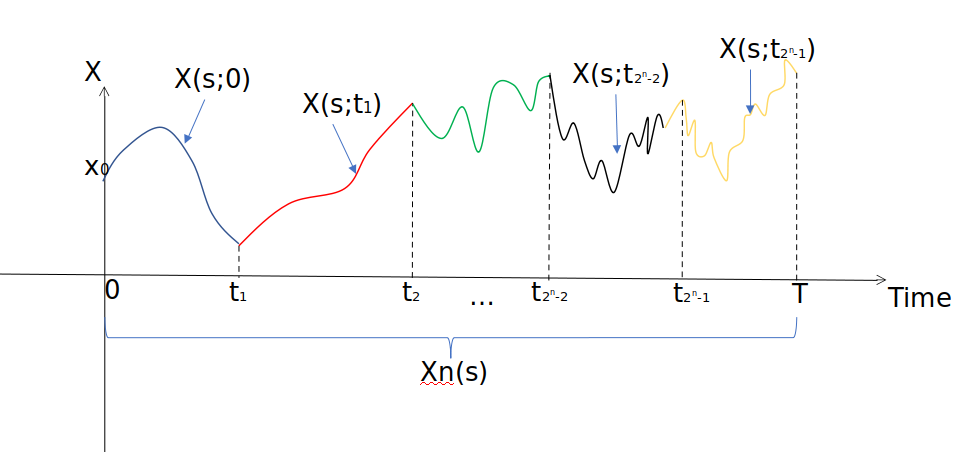}
 	\caption{This figure shows a sample path of the wealth process $X_n(\cdot)$ of the $2^{-n}$-committer. Each segment of the process, represented by a different color,
 follows the pre-committed optimal policy devised at the beginning of the corresponding time interval.  The wealth process is continuous. }\label{multimePlot}
 \end{figure}

Denote by $\{X_*(t;t_k): t\in[t_k,t_{k+1}]\}$ the above wealth process in the time interval $[t_k,t_{k+1}],\;k=0,1,\cdots,2^{n-1}$, with $X_*(0;0)=x_0$. By (\ref{Ws}), these processes $X_*(t;t_k),t\in[t_{k},t_{k+1}]$, $k=0,1,\cdots,2^{n-1}$, can be determined by the following SDEs recursively:
\begin{equation}\label{rec}
\begin{cases}
dX_*(t;t_k)=\left[(r(t)-\rho(t))X_*(t;t_k)+\gamma(t_k,T)\rho(t)  e^{-\int_t^T r(v) dv}X_*(t_k;t_{k - 1})\right]dt  \\
\ \ \ -B(t)(\sigma(t)\sigma(t)^\top)^{-1}\sigma(t)\left[X_*(t;t_k)-\gamma(t_k,T) e^{-\int_t^T r(v) dv}X_*(t_k;t_{k - 1})\right]dW(t),\;\; t\in[t_k,t_{k+1}], \\
X_*(t_k;t_k)=X_*(t_k;t_{k-1}),
\end{cases}
\end{equation}
where $X_*(t_0;t_{-1})$ is defined as $x_0$.

Now, by ``pasting" $X_*(\cdot;t_k)$, $k = 0,1,...,2^n-1$, we obtain the following process:
\begin{equation}\label{pasting}
X_n(s):=
\begin{cases}
X_*(s;0),& \mbox{$0\leq s< t_1$}, \\
X_*(s;t_1),& \mbox{$t_1\leq s< t_2$},\\
...\\
X_*(s;t_{2^n-1}),& \mbox{$t_{2^n-1}\leq s\leq T$},\\
\end{cases}
\end{equation}
which is the wealth process of the $2^{-n}$-committed agent, visualized by  Figure \ref{multimePlot}. Obviously, this process is adapted and continuous on $[0,T]$.

\subsection{Na\"ive policies}

While this $2^{-n}$-committed agent behaves somewhere between a pre-committed agent and a na\"ive one, she is closer   to the latter when $n$ becomes larger. Therefore, we define a na\"ive policy through the limit (in certain sense) of the  $2^{-n}$-committed wealth process as $n\rightarrow\infty$.

\begin{definition}
If the $2^{-n}$-committed wealth process $X_n$ converge to an adapted process $X$ in some sense, and the limiting process $X$ can be generated by a {\it time-consistent} admissible policy $\bm{\pi}^*=\bm{\pi}^*(\cdot,\cdot)$, then $\bm{\pi}^*$ is called a {\it na\"ive policy} of the problem (\ref{prob1})-(\ref{ft0}).
\end{definition}

Some remarks on this definition are in order. First, this definition applies  to more general time-consistent problems instead of just the current Markowitz problem.
As such, we intentionally leave vague
the precise sense in which $X_n$ converge to $X$ in order to make the definition  general and applicable to other problems. For the present problem, we will see momentarily that the convergence is in the weak-$L^2$ sense. Second, a  na\"ive policy in itself must be time-consistent, meaning that it can no longer depend on any initial $(s,y)$ and, in particular, on $(0,x_0)$, even though each $X_n$ is indeed constructed starting from a specific pair $(0,x_0)$. Finally, we do not define a na\"ive policy as simply the limit of $2^{-n}$-committed policies, because policies are in general only measurable and they may not converge and are hard to analyze. Instead, we consider the limit of  wealth processes that are much better behaved, and then use the limiting wealth equation to recover the corresponding na\"ive policy.

The following proposition, whose proof is deferred to Appendix \ref{proof_normfinite}, indicates that the $2^{-n}$-committed wealth processes $X_n$, $n=1,2,\cdots$, are uniformly bounded in  $L^2_\mathcal{F}([0,T];\R)$.
\begin{proposition}\label{normfinite}
	It holds that   $$||X_n||^2 := \E\int_0^T |X_n(s)|^2 ds < \infty,\ \forall n.$$
	Moreover, $||X_n||^2$ is uniformly bounded in $n$.
\end{proposition}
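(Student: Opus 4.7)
The plan is to bound $\sup_{t \in [0,T]} \E |X_n(t)|^2$ uniformly in $n$ by compounding step-by-step estimates on the subintervals $[t_k, t_{k+1}]$ of length $h := T 2^{-n}$. Since $X_n$ is the concatenation of the processes $X_*(\cdot; t_k)$ defined recursively by (\ref{rec}), it suffices to control $\phi_k(t) := \E |X_*(t; t_k)|^2$ on each $[t_k, t_{k+1}]$ in terms of its left-endpoint value $\phi_k(t_k) = \E |X_*(t_k; t_{k-1})|^2$, and then iterate over $k = 0, 1, \ldots, 2^n - 1$.

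For the inductive step, I would fix $k$ and apply It\^o's formula to $|X_*(t; t_k)|^2$ on $[t_k, t_{k+1}]$ using the linear SDE (\ref{rec}). Under Assumptions \textbf{(A1)}--\textbf{(A2)}, the coefficients $r, \rho, B, \sigma, (\sigma\sigma^\top)^{-1}$, together with $\gamma(\cdot, T)$ (already noted to be uniformly bounded on $[0,T]$), are all uniformly bounded, so the drift and diffusion coefficients in (\ref{rec}) each split into a ``bounded times $X_*(t; t_k)$'' part and a ``bounded times $X_*(t_k; t_{k-1})$'' part. Taking expectation and using Cauchy--Schwarz (or AM--GM) on the cross term in the drift then yields a differential inequality of the form $\phi_k'(t) \leq A \phi_k(t) + B \phi_k(t_k)$ on $[t_k, t_{k+1}]$, with absolute constants $A, B$ independent of $n$ and $k$. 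A standard localization argument, starting from $\phi_0(0) = x_0^2 < \infty$, justifies the use of expectation at each stage.

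Gronwall's inequality then gives $\phi_k(t_{k+1}) \leq \bigl[(1 + B/A) e^{A h} - B/A\bigr] \phi_k(t_k)$, and a short elementary comparison (matching values and first derivatives at $h = 0$ and controlling second derivatives) shows that the bracketed factor is dominated by $e^{(A+B) h}$. The crucial point is that this multiplicative factor is $1 + O(h)$, so that iterating over $2^n$ intervals yields $\phi_k(t_k) \leq e^{(A+B) k h} x_0^2 \leq e^{(A+B) T} x_0^2$, with the same bound extending to intermediate $t \in [t_k, t_{k+1}]$ via the Gronwall estimate above. Integrating in $t$ then produces $\|X_n\|^2 \leq T e^{(A+B) T} x_0^2$, which is uniform in $n$; finiteness for each fixed $n$ follows as a special case.

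The main obstacle I anticipate is ensuring the compounding does not blow up: a careless estimate of the form $\phi_k(t_{k+1}) \leq C \phi_k(t_k)$ with $C$ an $O(1)$ constant would lead to $C^{2^n}$ after $2^n$ iterations and diverge as $n \to \infty$. Keeping the multiplicative growth at $1 + O(h)$ per step is therefore essential, and this rests on the fact that the ``forcing'' terms $\gamma(t_k, T) \rho(t) e^{-\int_t^T r(v) dv} X_*(t_k; t_{k-1})$ in the drift and diffusion of (\ref{rec}) only contribute through a time integral of length $h$, making the Gronwall correction $\tfrac{B}{A}(e^{Ah} - 1)$ of order $h$ rather than of order $1$.
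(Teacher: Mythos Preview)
Your proposal is correct and follows essentially the same approach as the paper: apply It\^o's formula to the squared process on each subinterval, obtain a differential inequality for the second moment with constants independent of $n$ and $k$, and then iterate/induct across the $2^n$ subintervals, the key point being that the per-step growth factor is $1+O(h)$ so the product stays bounded by a quantity of the form $e^{CT}x_0^2$. The only cosmetic difference is packaging: the paper encodes the iteration via a single comparison ODE $Y'(s)=\bigl[R^*+(\gamma^*)^2 e^{-2\int_s^T r}\rho(s)\bigr]Y(s)$ on $[0,T]$ and proves $\E[X_*(s;t_k)^2]\le Y(s)$ by induction on $k$ (this $Y$ is then reused in later lemmas), whereas you compute the Gronwall constant explicitly and bound it by $e^{(A+B)h}$ before telescoping.
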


 Due to Proposition \ref{normfinite}, the sequence $\{X_n\}_{n=1}^\infty$ is uniformly bounded in the Hilbert space $L^2_\mathcal{F}([0,T];\R)$, and hence is weakly compact. So there exists a weakly convergent subsequence (still denoted as $\{X_n\}_{n=1}^\infty$ without loss of generality) and a process $X\in L^2_\mathcal{F}([0,T];\R)$ such that
$$X_n\rightarrow X\ \text{weakly in $L^2_\mathcal{F}([0,T];\mathbb{R})$}.$$
The following theorem is the main result of the paper, which characterizes this limiting process and, consequently, the na\"ive policy.

\begin{theorem}\label{theorem1}
The weakly limiting process $X$ satisfies the following SDE:
\begin{equation}\label{XSDE}
\begin{cases}
dX(t)=\left[(r(t)-\rho(t))+{\gamma}(t,T)\rho(t) e^{-\int_t^T r(s) ds}\right]X(t)dt \\
\ \ \ -B(t)(\sigma(t)\sigma(t)^\top)^{-1}\sigma(t)\left[1-{\gamma}(t,T)e^{-\int_t^T r(s) ds}\right]X(t)dW(t),\;\;t\in[0,T], \\
X(0)=x_0.
\end{cases}
\end{equation}
Moreover, the following is the na\"ive policy:
\begin{equation}\label{PI}\bm{\pi}^*(t,x)=-[\sigma(t)\sigma(t)^\top]^{-1}B(t)^\top [1-\gamma(t,T)e^{-\int_t^T r(s) ds}]x,\;\;(t,x)\in [0,T]\times \R.
\end{equation}
\end{theorem}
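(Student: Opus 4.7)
The plan is to write the wealth dynamics of the $2^{-n}$-committer in integral form, isolate the ``frozen'' values $X_n(t_k)$ from the running values $X_n(t)$, send $n\to\infty$, and then read off $\bm{\pi}^*$ by matching the limiting SDE against the feedback wealth equation \eqref{wealtheqfb}. To that end, set the grid projection $\eta_n(s):=t_k$ for $s\in[t_k,t_{k+1})$. Pasting \eqref{rec} according to \eqref{pasting} gives
\begin{equation*}
X_n(t)=x_0+\int_0^t A_n(s)\,ds+\int_0^t D_n(s)\,dW(s),
\end{equation*}
where
\begin{align*}
A_n(s)&=(r(s)-\rho(s))X_n(s)+\gamma(\eta_n(s),T)\rho(s)e^{-\int_s^T r(v)dv}X_n(\eta_n(s)),\\
D_n(s)&=-B(s)[\sigma(s)\sigma(s)^\top]^{-1}\sigma(s)\bigl[X_n(s)-\gamma(\eta_n(s),T)e^{-\int_s^T r(v)dv}X_n(\eta_n(s))\bigr].
\end{align*}
The target limiting integrands are obtained by formally replacing $\eta_n(s)$ with $s$ and $X_n$ with $X$.

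The first step is an equicontinuity-type estimate for $X_n$ between grid points: using \textbf{(A1)}--\textbf{(A3)}, Proposition \ref{normfinite}, and standard It\^o moment bounds applied to \eqref{rec} on each subinterval of length $T2^{-n}$, one obtains
\begin{equation*}
\sup_n\ \sup_{s\in[0,T]}\E\,|X_n(s)-X_n(\eta_n(s))|^2\le C\cdot 2^{-n}.
\end{equation*}
Together with the continuity of $\gamma(\cdot,T)$ on $[0,T]$ already noted after \eqref{gamma}, this implies that in $L^2_\mathcal{F}([0,T];\R)$, the coefficients $A_n$ and $D_n$ differ from their ``refreshed'' versions (obtained by replacing $\eta_n(s)$ with $s$ throughout) by remainders whose norms tend to $0$.

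The main obstacle is passing to the limit in the It\^o integral $\int_0^\cdot D_n\,dW$, since weak $L^2$ convergence of the integrands does not, by itself, yield a stochastic-integral limit. I would handle this by testing the integral equation against an arbitrary $\xi\in L^2(\Omega,\mathcal{F}_t,\prob)$: by the martingale representation theorem, $\xi=\E\xi+\int_0^T\Phi(s)\,dW(s)$ with $\Phi\in L^2_\mathcal{F}([0,T];\R^m)$, so It\^o's isometry gives $\E\bigl[\xi\int_0^t D_n\,dW\bigr]=\E\int_0^t D_n(s)\Phi(s)\,ds$, which is a continuous linear functional of $D_n\in L^2_\mathcal{F}([0,T];\R^m)$. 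The refreshed $D_n$ is of the form $K(s)X_n(s)$ for a bounded deterministic row vector $K$, hence it converges weakly to $K(s)X(s)$ by the weak convergence $X_n\rightharpoonup X$; the discrepancy between $D_n$ and its refreshed version vanishes strongly by the equicontinuity estimate. The drift term is handled identically and more easily. This identifies $X$ as a solution of \eqref{XSDE}; since \eqref{XSDE} is a linear SDE with uniformly bounded deterministic coefficients, strong existence and uniqueness pin down $X$ and show that the whole sequence (not just a subsequence) converges.

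Finally, to extract $\bm{\pi}^*$ I would match \eqref{XSDE} with \eqref{wealtheqfb}. For the candidate \eqref{PI}, one computes $B(t)^\top\bm{\pi}^*(t,x)=-\rho(t)[1-\gamma(t,T)e^{-\int_t^T r(v)dv}]x$, making the drift in \eqref{wealtheqfb} equal to $[(r(t)-\rho(t))+\gamma(t,T)\rho(t)e^{-\int_t^T r(v)dv}]x$; and since $[\sigma(t)\sigma(t)^\top]^{-1}$ is symmetric, $\bm{\pi}^*(t,x)^\top\sigma(t)=-B(t)[\sigma(t)\sigma(t)^\top]^{-1}\sigma(t)[1-\gamma(t,T)e^{-\int_t^T r(v)dv}]x$, matching the diffusion in \eqref{XSDE}. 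Time-consistency of $\bm{\pi}^*$ is immediate since it depends only on $(t,x)$, and admissibility follows because \eqref{XSDE} is a linear SDE admitting a unique strong solution in $L^2_\mathcal{F}([s,T];\R)$ for every $(s,y)\in[0,T)\times\R$.
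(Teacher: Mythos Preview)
Your approach is correct and takes a genuinely different route from the paper. The paper's proof invokes Mazur's lemma: from the weak limit $X_n\rightharpoonup X$ it builds convex combinations $V_n=\sum_{k=n}^{N(n)}a_{k,n}X_k$ that converge \emph{strongly} in $L^2_\mathcal{F}([0,T];\R)$, writes the SDE satisfied by $V_n$, introduces the auxiliary process $Z$ solving $dZ(t)=[A(t)+\gamma(t)C(t)]X(t)\,dt+[-F(t)+\gamma(t)D(t)]X(t)\,dW(t)$ (with $X$, the weak limit, on the right-hand side), and then shows $\|V_n-Z\|_{L^2_\mathcal{F}}\to0$ by direct estimates that use the equicontinuity bound of Lemma~\ref{lemmaclose} together with the uniform moment bound of Lemma~\ref{boundode}. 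Since $V_n\to X$ strongly as well, $X=Z$ a.e.\ and $X$ solves \eqref{XSDE}. By contrast, you stay in the weak topology throughout and use the martingale representation theorem to recognize $D_n\mapsto \E[\xi\int_0^t D_n\,dW]$ as a bounded linear functional on $L^2_\mathcal{F}([0,T];\R^m)$, so that weak convergence of the (refreshed) integrands passes directly through the It\^o integral. This avoids Mazur's lemma and the somewhat heavy bookkeeping of the convex combinations, at the price of invoking martingale representation; it also gives a clean reason why the whole sequence converges (uniqueness for the limiting linear SDE), which the paper leaves implicit.

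One small step to make explicit: your testing argument shows, for each fixed $t$, that $X_n(t)\rightharpoonup \bar X(t)$ in $L^2(\Omega)$ where $\bar X(t):=x_0+\int_0^t[\cdots]X(s)\,ds+\int_0^t[\cdots]X(s)\,dW(s)$; it does not yet say $\bar X=X$ as elements of $L^2_\mathcal{F}([0,T];\R)$. That identification follows easily from the uniform bound $\sup_{n,t}\E|X_n(t)|^2\le Y(T)$: for any $\zeta\in L^2_\mathcal{F}([0,T];\R)$, dominated convergence gives
\[
\E\!\int_0^T\!\zeta(t)\bar X(t)\,dt=\int_0^T\!\lim_n\E[\zeta(t)X_n(t)]\,dt=\lim_n\E\!\int_0^T\!\zeta(t)X_n(t)\,dt=\E\!\int_0^T\!\zeta(t)X(t)\,dt,
\]
so $\bar X=X$ a.e.\ and the version $\bar X$ of $X$ solves \eqref{XSDE}. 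With that detail added, your argument is complete and, in my view, slightly more transparent than the paper's.
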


A proof of Theorem \ref{theorem1} is delayed to Appendices \ref{proof_theorem1}.

Note that the explicitly presented policy (\ref{PI}) indeed does not depend on any initial pair $(s,y)$ and, in particular, on $(0,x_0)$. This means that even if the wealth process of the $2^{-n}$-committed is constructed from an arbitrarily different initial pair $(s,y)$, it will lead to the same na\"ive policy (\ref{PI}). On the other hand, it generates $X$ as its wealth process for the given initial $(0,x_0)$.

\section{Comparison between Na\"ive and Other Types of Policies}\label{Naive_comparisons}
In the continuous-time MV literature, two types of equilibrium policies by consistent planners have been introduced and  studied:  the {\it weak} equilibrium policies  by \cite{BMZ14} and the {\it regular} equilibrium policies by \cite{HJ17}.  In this section, we compare the  na\"ive policies with these two types of equilibrium policies as well as the pre-committed ones, in a Black--Scholes market.

\subsection{Weak and regular equilibrium policies}
 We first review the two types of equilibrium strategies, whose definitions can be found, in slight variants of the MV formulation, in \cite{BMZ14} and \cite{HJ17} respectively.

 Given $(s,y)\in [0,T]\times\R$, \cite{BMZ14} consider the following problem:

 \begin{equation}\label{BjorkP}
 \max\limits_{\pi(\cdot)\in {\cal U}_{s,y}}\ J(s,y;\pi(\cdot)):=\E_{s,y}[X(T)]-\frac{\alpha(s,y)}{2}\var_{s,y}(X(T))
 \end{equation}
\begin{equation}\label{ft0BjorkP}
\text{subject to}\
(X(\cdot),\pi(\cdot))\ \text{satisfy}\ (\ref{wealtheq}) \mbox{ with } X(s)=y.
\end{equation}
In the objective function of this problem, there is a risk-aversion term $\alpha(s,y)>0$ that depends on the initial time $s$ and initial state $y$; see \cite{BMZ14} for the many discussions on the motivation of such a varying risk-aversion term.\footnote{\cite{BMZ14} consider only the state-dependent risk aversion
$\alpha(s,y)=\alpha(y)$, but the method and results therein readily extend to the time--state dependent case presetned here.}  The problem is again time-inconsistent.
\cite{BMZ14} study the behavior of a consistent planner by considering the equilibrium policies defined as follows. 
Given an admissible (time-consistent) policy  $\hat{\bm{\pi}}(\cdot,\cdot)$, construct a new policy  $\bm{\pi}_h$ by
\begin{equation}\label{pihcons}
    \bm{\pi}_h(t,x): =
    \begin{cases}
    \pi, &t\in[s,s + h),\;x\in\R,\\
    \hat{\bm{\pi}}(t,x), &t\in[0,T]\setminus [s,s + h),\;x\in\R,
    \end{cases}
\end{equation}
where $\pi\in\mathbb{R}^m$, $h > 0$ and $s\in[0,T)$ are aribitrarily given.
Let $\pi(\cdot)$ and $\pi_h(\cdot)$ be respectively the portfolio processes generated by $\hat{\bm{\pi}}$ and $\bm{\pi}_h$ starting from $(s,y)$.
We say that $\hat{\bm{\pi}}$ is a {\it weak equilibrium policy} if the  pertubed policy $\bm{\pi}_h$ is admissible and
\begin{equation}
    \lim\limits_{h\to 0} \inf \frac{J(s,y;\hat{\pi}) - J(s,y;\pi_h)}{h} \geq 0,
\end{equation}
for all $\pi\in\mathbb{R}^m$ and $(s,y)\in [0,T)\times\R$.

On the other hand, \cite{HJ17} formulate the following problem:
\begin{equation}\label{XueP}
\min\limits_{\pi(\cdot)\in {\cal U}_{s,y}}\ \var_{s,y}(X(T))
\end{equation}
\begin{equation}\label{ft}
\text{subject to}\
\begin{cases}
\E_{s,y}[X(T)]=  L(s,y),\\
(X(\cdot),\pi(\cdot))\ \text{satisfy}\ (\ref{wealtheq}) \mbox{ with } X(s)=y,
\end{cases}
\end{equation}
where $L(s,y)$ indicates the expected terminal wealth target when the initial pair is $(s,y)$.\footnote{In the original formulation of \cite{HJ17}, the expected terminal wealth constraint is $\E_{s,y}[X(T)]\geq  L(s,y)$,
which is equivalent to the equality constraint formulated here.} When $L(s,y) = yf(s,T)$, the problem (\ref{XueP})--(\ref{ft}) reduces to the problem (\ref{prob1})--(\ref{ft0}). \cite{HJ17} also study a consistent planner, except that they use the notion of  {\it regular} equilibrium policies which is very different from that of the weak equilibrium policies. Specifically, an admissible, time-consistent policy  $\hat{\bm{\pi}}$ is called a {\it regular equilibrium policy} if for any $(s,y)\in [0,T)\times\R$, any $\pi\in\mathbb{R}^m$ such that $\bm{\pi}_h$ constructed by (\ref{pihcons}) is admissible for sufficeintly small $h>0$, we have\footnote{Here, the term ``admissible" requires the corresponding portfolio processes generated by the relevant policies for $(s,y)$ to also satisfy the expectation constraint in (\ref{ft}).}
%
%
%
\begin{equation}
\var_{s,y}(X^{\pi_h}(T))-\var_{s,y}(X^{\hat{\pi}}(T))\geq0
\end{equation}
for sufficiently small $h>0$, where $X^{\hat{\pi}}(T)$ and $X^{\pi_h}(T)$ are the terminal wealth values, both starting from $(s,y)$ and  under $\hat{\bm{\pi}}$ and $\bm{\pi}_h$ respectively.

The difference between the problems (\ref{BjorkP})--(\ref{ft0BjorkP}) and (\ref{XueP})--(\ref{ft}) is that the former uses a weighting coefficient $\alpha(s,y)/2$ in its objective function while the latter takes $L(s,y)$ in its constraint. The two problems are related via the Lagrange multiplier method. As a result, if we choose $\alpha(s,y)$ and $L(s,y)$ in a certain way, then the respective pre-committed optimal polices  for the two problems coincide, as stipulated in the following proposition.
\begin{proposition}\label{BjorkXueEq}
If
\begin{equation}\label{GL}
\frac{1}{\alpha(s,y)}e^{\int_s^T\rho(t)dt}+ye^{\int_s^Tr(t)dt}
=\frac{L(s,y)-e^{\int_s^T[r(t)-\rho(t)]dt}y}{1-e^{-\int_s^T\rho(t)dt}},\;\;\forall (s,y)\in [0,T]\times\R
\end{equation}
holds, then the pre-committed optimal policies  for (\ref{BjorkP})--(\ref{ft0BjorkP}) and (\ref{XueP})--(\ref{ft}) are the same for any $(s,y)\in [0,T]\times\R$.
\end{proposition}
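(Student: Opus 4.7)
The plan is to derive the pre-committed optimal policies for each of the two problems in closed form and then compare them as functions of $(t,x)$. For problem (\ref{XueP})--(\ref{ft}), observe that it is exactly (\ref{prob1})--(\ref{ft0}) with the target $yf(s,T)$ replaced by $L(s,y)$; since both the policy (\ref{us}) and the scalar $\gamma(s,T)y$ in (\ref{gamma}) depend on this target linearly, the pre-committed optimal policy is
\[
\bm{\pi}^{HJ}_*(t,x;s,y) = -[\sigma(t)\sigma(t)^\top]^{-1}B(t)^\top\bigl[x - G_L(s,y)\,e^{-\int_t^T r(v)dv}\bigr],
\]
with $G_L(s,y) := (L(s,y) - y e^{\int_s^T[r(v)-\rho(v)]dv})/(1 - e^{-\int_s^T\rho(v)dv})$, which is precisely the right-hand side of (\ref{GL}).

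For problem (\ref{BjorkP})--(\ref{ft0BjorkP}), I would use a standard embedding reduction: any pre-committed optimal $\pi^*$ must achieve some mean $m^* := \E_{s,y}[X^*(T)]$ and, among all admissible portfolios attaining $\E_{s,y}[X(T)] = m^*$, must minimize the variance---otherwise a lower-variance competitor would strictly increase $J$. This inner subproblem is (\ref{XueP})--(\ref{ft}) with $L \equiv m^*$, so by the efficient frontier (\ref{ef}) the minimum variance equals $(m^* - ye^{\int_s^T r(v)dv})^2/(e^{\int_s^T\rho(v)dv} - 1)$. Substituting into the outer objective $J(s,y;\pi)$ converts (\ref{BjorkP})--(\ref{ft0BjorkP}) to the one-dimensional concave quadratic problem
\[
\max_{m\in\R}\Bigl\{m - \frac{\alpha(s,y)}{2}\cdot\frac{(m - ye^{\int_s^T r(v)dv})^2}{e^{\int_s^T\rho(v)dv} - 1}\Bigr\},
\]
whose first-order condition yields $m^*(s,y) = ye^{\int_s^T r(v)dv} + (e^{\int_s^T\rho(v)dv} - 1)/\alpha(s,y)$. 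Hence the pre-committed optimal policy of (\ref{BjorkP})--(\ref{ft0BjorkP}) has the same structural form as $\bm{\pi}^{HJ}_*$ but with $G_L(s,y)$ replaced by a scalar $z^*(s,y)$, obtained from the definition of $G_L$ by the substitution $L \equiv m^*$.

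Finally, I would simplify $z^*(s,y)$ using the identity $(e^{\int_s^T\rho\,dv} - 1)/(1 - e^{-\int_s^T\rho\,dv}) = e^{\int_s^T\rho\,dv}$ to obtain $z^*(s,y) = ye^{\int_s^T r(v)dv} + e^{\int_s^T\rho(v)dv}/\alpha(s,y)$, which is exactly the left-hand side of (\ref{GL}). Since the two pre-committed optimal policies share the same affine structure in $x$ with coefficients differing only in the scalars $z^*(s,y)$ and $G_L(s,y)$, they coincide on $[s,T]\times\R$ if and only if $z^*(s,y) = G_L(s,y)$, which is precisely relation (\ref{GL}). The main obstacle is rigorously justifying the embedding reduction---namely that a pre-committed optimum of (\ref{BjorkP})--(\ref{ft0BjorkP}) necessarily minimizes the variance at its attained mean---which follows from strict convexity of the variance functional in the admissible portfolio and the well-posedness of the outer one-dimensional maximization; the remaining computations are routine algebra.
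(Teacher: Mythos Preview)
Your proposal is correct and follows essentially the same approach as the paper: compute the pre-committed optimal policy for each formulation in the common affine form $-[\sigma\sigma^\top]^{-1}B^\top\bigl[x-(\text{scalar})\,e^{-\int_t^T r}\bigr]$ and observe that condition (\ref{GL}) is exactly the equality of the two scalars. The only difference is cosmetic: the paper reads off the scalar for problem (\ref{BjorkP})--(\ref{ft0BjorkP}) by citing \cite{ZL00} directly, whereas you rederive it via the embedding argument and the efficient frontier (\ref{ef}) already stated in the paper, landing at the same $z^*(s,y)=ye^{\int_s^T r(v)\,dv}+\alpha(s,y)^{-1}e^{\int_s^T\rho(v)\,dv}$.
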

\begin{proof}
It follows from the equations (5.12), (5.1) and (6.7) in \cite{ZL00}
that the pre-committed optimal policy  of (\ref{BjorkP})--(\ref{ft0BjorkP}) is
\begin{equation}
\begin{aligned}
\bar{\bm{\pi}}_*(t,x;s,y)=-[\sigma(t)\sigma(t)^\top]^{-1}B(t)^\top \left[x-{\bar\gamma}(s,T,y) e^{-\int_t^T r(v) dv}\right],\;\;(t,x)\in [s,T]\times \R,
\end{aligned}
\end{equation}
where
$$ \bar{\gamma}(s,T,y):= \frac{1}{\alpha(s,y)}e^{\int_s^T \rho(v)dv }+e^{\int_s^T r(v) dv}y.$$

On the other hand, it follows from \cite[Theorem 2.1]{LZ06} that  the precommitted strategy of (\ref{XueP})--(\ref{ft}) is
\begin{equation}
\begin{aligned}
\tilde{\bm{\pi}}_*(t,x;s,y)=-[\sigma(t)\sigma(t)^\top]^{-1}B(t)^\top \left[x-{\tilde\gamma}(s,T,y) e^{-\int_t^T r(v) dv}\right],\;\;(t,x)\in [s,T]\times \R,
\end{aligned}
\end{equation}
where
$$\tilde{\gamma}(s,T,y):=\frac{L(s,y) - e^{\int_s^T [r(v) - \rho(v)] dv}y}{1 - e^{-\int_s^T\rho(v) dv}}.$$

It is now evident that if (\ref{GL}) is satisfied, then  $\bar{\gamma}(s,T,y) \equiv  \tilde{\gamma}(s,T,y)$ leading to $\bar{\bm{\pi}}_*(t,x;s,y) \equiv \tilde{\bm{\pi}}_*(t,x;s,y)$.
\end{proof}

The condition (\ref{GL}) ensures that the pre-committed solutions of the two problems
coincide. As a result, the na\"ive policies of the two problems are also identical because
they are obtained via the limit of pre-committed policies. However,  (\ref{GL}) does not necessarily lead to the same weak/regular equilibrium policies  of the two problems, because equilibrium policies are not based on pre-committed ones.


\subsection{Comparisons}

We now compare the na\"ive policies with the weak/regular equilibrium policies and the pre-committed polices, in a Black--Scholes market for simplicity. Specifically, there is a risk-free asset and only one risky asset (i.e. $m=1$) with $r(t)\equiv r>0$,
$B(t)\equiv b-r>0$, $\sigma(t)\equiv \sigma>0$. As a result, $\rho(t)\equiv \rho=(\frac{b-r}{\sigma})^2>0$.

 We carry out the comparison for two cases. In Subsection \ref{gammax}, we choose $\alpha(s,y)=\frac{\alpha}{y}$ for some constant $\alpha>0$ in the problem (\ref{BjorkP})--(\ref{ft0BjorkP}),  which is also a case examined closely in  \cite{BMZ14}. Subsection \ref{Ltx} studies the case when  $L(s,y)=ye^{k(T-s)}$ for some constant $k>r$ in the problem (\ref{XueP})--(\ref{ft}). In each case, we choose $f(\cdot,\cdot)$, $L(\cdot,\cdot)$ and $\alpha(\cdot,\cdot)$ in such a way (e.g. to  satisfy (\ref{GL})) that the different formulations of the MV problem are consistent in their respective pre-committed optimal policies.

\subsubsection{The case $\alpha(s,y)=\frac{\alpha}{y}$}\label{gammax}
When $\alpha(s,y)=\frac{\alpha}{y}$, the corresponding  $L$ according to (\ref{GL}) is
\begin{equation}\label{Le1}
L(s,y)=y\left[\frac{1}{\alpha}\left(e^{(T-s)\rho}-1+\alpha e^{(T-s)r}\right)\right],
\end{equation}
whereas the corresponding $f$ is
\begin{equation}\label{f1}
f(s,T)=\frac{1}{\alpha}\left[e^{(T-s)\rho}-1+\alpha e^{(T-s)r}\right].
\end{equation}
It is easy to check that this $f$ satisfies Assumption \textbf{(A3)}. By Theorem \ref{theorem1},
the na\"ive policy is
\begin{equation}\label{PII}
\bm{\pi}^*(t,x)=-\frac{b-r}{\sigma^2}\left[1-\frac{f(t,T)
-e^{(r-\rho)(T-t)}}{1-e^{-\rho(T-t)}}
e^{-r(T-t)}\right]x,\;\;(t,x)\in [0,T]\times \R.
\end{equation}
Substituting the expression of $f$ in (\ref{f1}) into the above and going through some simple computation, we finally get
\begin{equation}\label{PIII}
\bm{\pi}^*(t,x)=\frac{b-r}{\alpha\sigma^2}e^{(\rho-r)(T-t)}x,\;\;(t,x)\in [0,T]\times \R.
\end{equation}
The {\it risky weight} function of this policy, defined as the ratio between the dollar amount in the stock and the total wealth and denoted by $c_{na}$, is thereby
\begin{equation}\label{cnacase1}
c_{na}(t):=\frac{\bm{\pi}^*(t,x)}{x}=\frac{b-r}{\alpha\sigma^2}e^{(\rho-r)(T-t)},
\;\;t\in [0,T],
\end{equation}
which turns out to be a function of $t$ only.

On the other hand, when  $\alpha(s,y)=\frac{\alpha}{y}$, Theorem 4.6 in \cite{BMZ14}
gives
the weak equilibrium policy  of the problem (\ref{BjorkP})--(\ref{ft0BjorkP}) as
\begin{equation}
\bm{\pi}_{we}(t,x)=c_{we}(t)x,
\end{equation}
where $c(t)\equiv c_{we}(t)$ is the unique solution to  the following integral equation
\begin{equation}\label{cwe}
c(t)=\frac{b-r}{\alpha\sigma^2}\left[e^{-\int_t^T[r+(b-r)c(s)+\sigma^2c(s)^2]ds}+\alpha e^{-\int_t^T \sigma^2 c(s)^2 ds}-\alpha\right].
\end{equation}
Similarly, $c_{we}$ is the risky weight function of the weak equilibrium policy.

Finally,  we can rewrite (\ref{Le1}) as
\begin{equation}\label{Le11}
L(s,y)=ye^{\int_s^T[r+\psi(t)]dt}
\end{equation}
where
\begin{equation}\label{psi}
\psi(t):=\frac{r+(\rho-r)e^{\rho(T-t)}}{\alpha e^{(T-t)r}+e^{\rho(T-t)}-1}.
\end{equation}

Applying Theorem 1-i in \cite{HJ17} and noting that the solution to the problem (2.10) therein is $v^*(t)=\frac{\psi(t)}{b-r}$, we obtain the regular equilibrium policy for (\ref{XueP})--(\ref{ft}) to be
\begin{equation}
\bm{\pi}_{re}(t,x)=c_{re}(t)x,
\end{equation}
where
\begin{equation}
c_{re}(t):=\frac{\psi(t)}{b-r}=\frac{1}{b-r}\frac{r+(\rho-r)e^{\rho(T-t)}}{\alpha e^{(T-t)r}+e^{\rho(T-t)}-1},\;\;t\in [0,T]
\end{equation}
is the risky weight of this  equilibrium policy at $t\in[0,T]$.

The following proposition shows that  the na\"ive policy  allocates {\it strictly} more weight to the risky asset than the two equilibrium policies at {\it any} time before $T$.
\begin{proposition}\label{compare_naive_1}
In the Black--Scholes market, if $\alpha(s,y)=\frac{\alpha}{y}$, then we have
	$$c_{we}(t)<c_{na}(t),\;\;c_{re}(t)<c_{na}(t),\;\;\forall t\in[0,T),$$
for any $\alpha>0$.
\end{proposition}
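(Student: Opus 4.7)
The plan is to handle the two inequalities separately, in each case reducing the comparison to an algebraic inequality by exploiting the explicit form of $c_{na}$.

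First, for $c_{re}(t) < c_{na}(t)$, I would substitute the explicit formulas and clear denominators. Letting $\tau = T - t$ and rewriting $c_{na}(t) = \frac{\rho}{\alpha(b-r)}e^{(\rho - r)\tau}$ via the identity $\rho = (b-r)^2/\sigma^2$, the inequality becomes
\[
\frac{\rho\,e^{(\rho-r)\tau}}{\alpha} \,>\, \frac{r + (\rho - r)e^{\rho\tau}}{\alpha\,e^{r\tau} + e^{\rho\tau} - 1}.
\]
After multiplying through by $\alpha(\alpha e^{r\tau} + e^{\rho\tau} - 1) > 0$, expanding, and cancelling the common $\alpha\rho\,e^{\rho\tau}$ terms that appear on the two sides, this reduces to
\[
(e^{\rho\tau} - 1)\,\bigl[\rho\,e^{(\rho - r)\tau} + \alpha r\bigr] \,>\, 0,
\]
which is immediate for $\tau > 0$ since both factors are strictly positive (recall $\rho, r, \alpha > 0$). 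At $\tau = 0$ equality holds, consistent with the two policies agreeing at $T$.

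The second inequality, $c_{we}(t) < c_{na}(t)$, is more delicate because $c_{we}$ is available only implicitly through the integral equation (\ref{cwe}). The plan here is to derive an explicit upper bound on $c_{we}$ directly from that equation. Writing $k := \tfrac{b-r}{\alpha\sigma^2}$, $A(t) := \int_t^T [r + (b-r)c_{we}(s) + \sigma^2 c_{we}(s)^2]\,ds$ and $B(t) := \int_t^T \sigma^2 c_{we}(s)^2\,ds$, the equation reads $c_{we}(t) = k\bigl[e^{-A(t)} + \alpha(e^{-B(t)} - 1)\bigr]$. Since $B(t) \geq 0$, the second term is non-positive, giving $c_{we}(t) \leq k\,e^{-A(t)}$. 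I would then complete the square in the integrand of $A$:
\[
r + (b-r)c + \sigma^2 c^2 \,=\, \sigma^2\left(c + \frac{b-r}{2\sigma^2}\right)^2 + r - \frac{\rho}{4} \,\geq\, r - \frac{\rho}{4},
\]
which gives $A(t) \geq (r - \rho/4)(T - t)$ and hence $c_{we}(t) \leq k\,e^{(\rho/4 - r)(T - t)}$. Compared with $c_{na}(t) = k\,e^{(\rho - r)(T - t)}$, the gap between the two exponents is $(3\rho/4)(T - t)$, which is strictly positive for $t < T$ because $\rho > 0$; hence $c_{we}(t) < c_{na}(t)$ on $[0, T)$.

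The hard part will be producing an informative upper bound on $c_{we}$ without any explicit knowledge of the function. A more naive route would first invoke a positivity lemma $c_{we} \geq 0$ to obtain the sharper $A(t) \geq r(T-t)$, but establishing such positivity from the fixed-point equation (\ref{cwe}) is not entirely straightforward. The completing-the-square device sidesteps this issue entirely, extracting the needed lower bound on $A$ purely from the algebraic structure of the integrand, and the resulting $3\rho/4$ slack in the exponents is precisely what delivers strict inequality.
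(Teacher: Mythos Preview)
Your proof is correct and follows the same two-step outline as the paper: a direct algebraic comparison for $c_{re}$, and for $c_{we}$ the device of dropping the non-positive $\alpha(e^{-B(t)}-1)$ term and then bounding the quadratic integrand $r+(b-r)c+\sigma^2 c^2$ from below.

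The only noteworthy difference is in which lower bound on the quadratic is used. You complete the square to obtain the sharp minimum $r-\rho/4$, yielding $c_{we}(t)\leq k\,e^{(\rho/4-r)(T-t)}$, and then use the strict gap $3\rho/4$ in the exponents to conclude. The paper instead observes that the discriminant of $\rho+(b-r)c+\sigma^2 c^2$ is $(b-r)^2-4\rho\sigma^2=-3(b-r)^2<0$, so this quadratic is strictly positive and hence $r+(b-r)c+\sigma^2 c^2>r-\rho$; this weaker lower bound has the advantage of landing exactly on $c_{na}(t)=k\,e^{(\rho-r)(T-t)}$, so no further comparison is needed. Your bound on $c_{we}$ is actually tighter, at the cost of one extra line; the paper's is looser but hits the target on the nose. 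For $c_{re}$, the paper uses the quick estimates $r+(\rho-r)e^{\rho\tau}<\rho e^{\rho\tau}$ and $\alpha e^{r\tau}+e^{\rho\tau}-1>\alpha e^{r\tau}$ on numerator and denominator separately, whereas you clear denominators and factor; both are equally short.
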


\begin{proof}
Let us first prove $c(t)\equiv c_{we}(t)<c_{na}(t)\;\;\forall t\in[0,T)$.
We have the obvious inequality
\begin{equation}\label{oi}
\rho+(b-r)c(s)+\sigma^2c(s)^2>0,\;\;\forall s\in[0,T)
\end{equation}
because $\Delta:=(b-r)^2-4\rho\sigma^2=-3(b-r)^2<0$. Recalling that $c(\cdot)$ satisfies (\ref{cwe}), we deduce
\begin{equation}
\begin{aligned}
c_{we}(t)&=\frac{b-r}{\alpha\sigma^2}\left[e^{-\int_t^T[r+(b-r)c(s)+\sigma^2c(s)^2]ds}+\alpha e^{-\int_t^T \sigma^2 c(s)^2 ds}-\alpha\right]\\
&\leq \frac{b-r}{\alpha\sigma^2}e^{-\int_t^T[r+(b-r)c(s)+\sigma^2c(s)^2]ds}\\
&< \frac{b-r}{\alpha\sigma^2}e^{-\int_t^T(r-\rho)ds}\\
&=\frac{b-r}{\alpha\sigma^2}e^{(\rho-r)(T-t)}=c_{na}(t),\ \forall t\in[0,T).
\end{aligned}
\end{equation}

Next, we prove $c_{re}(t)<c_{na}(t)\;\;\forall t\in[0,T)$. Indeed
\[ \begin{aligned}
c_{re}(t)&=\frac{1}{b-r}\frac{r+(\rho-r)e^{\rho(T-t)}}{\alpha e^{(T-t)r}+e^{\rho(T-t)}-1}\\
&<\frac{1}{b-r}\frac{\rho e^{\rho(T-t)}}{\alpha e^{(T-t)r}}\\
&=\frac{b-r}{\alpha\sigma^2}e^{(\rho-r)(T-t)}=c_{na}(t),\ \forall t\in[0,T).
\end{aligned}
\]

The proof is complete.

\end{proof}

So  na\"ive policies take more risk exposure than the two types of equilibrium policies. It is interesting to compare the na\"ivet\`e also with a pre-committer, realizing that the former strives to follow the latter at {\it every} initial pair $(s,y)$. Take $(s,y)=(0,x_0)$ for example. The pre-committer's expected terminal wealth is
\begin{equation}\label{f0T} \E_{0,x_0}[X_*(T)]=x_0f(0,T)={x_0}e^{rT}\frac{1}{\alpha}\left[e^{(\rho -r) T}-e^{-rT}+\alpha \right],
\end{equation}
noting (\ref{f1}). Although the na\"ivet\`e's original expected target return was also $x_0f(0,T)$ at $(0,x_0)$, he changes mind all the time subsequently so his {\it actual} target return at $(0,x_0)$ can be significantly deviate from the original one. To see this, plugging in the na\"ive policy (\ref{PIII}) to the wealth equation (\ref{wealtheq}) to obtain
{\small
\begin{equation}\label{wealtheqex}
dX^*(t)=\left[rX^*(t)+\frac{1}{\alpha}\rho e^{(\rho-r)(T-t)}X^*(t)\right]dt+\frac{b-r}{\alpha\sigma}e^{(\rho-r)(T-t)}X^*(t)dW(t),\ t\in[0,T];\;\; X^*(0)=x_0.
\end{equation}}
Taking the integral form of this SDE and applying expectation on both sides, we get an ODE in terms of $\E[X^*(\cdot)]\equiv \E_{0,x_0}[X^*(\cdot)]$. Solving this ODE we arrive at
\begin{equation}\label{f0T0} \E_{0,x_0}[X^*(T)]={x_0}e^{rT}e^{\frac{1}{\alpha}\frac{\rho}{\rho-r}[e^{(\rho -r)T}-1]}.
\end{equation}
Recall that $\alpha>0$ is the risk aversion coefficient,  and the smaller $\alpha$ the less risk averse the agent is. Comparing (\ref{f0T0}) with  (\ref{f0T}) and noting that $\frac{\rho}{\rho-r}[e^{(\rho -r)T}-1]>0$ always holds, the na\"ivet\`e's expected terminal wealth is larger than the pre-committer's when $\alpha$ is small, and the former grows exponentially fast while the latter does only linearly in $\alpha^{-1}$ as $\alpha\to 0$. So a na\"ive policy ends up achieving   a much higher   expected terminal wealth than a pre-committed one which is also his {\it originally} planned target.\footnote{This also reconciles with the previously proved fact that na\"ive policies are more exposed to the stock than equilibrium ones.} However, this by no means implies that the former is superior to the latter because in an MV model there are two criteria and the variance is as important as the return. Indeed, it is straightforward to check that the na\"ive policy (\ref{PII}) is {\it different} from the {\it unique} pre-committed optimal policy (\ref{us}) under the {\it new} expected terminal wealth (\ref{f0T0}), hence must be MV {\it inefficient}.\footnote{Alternatively, one can  calculate $\var_{0,x_0}(X^*(T))$ and show that it is strictly larger than the right hand side of (\ref{ef}) with the expected terminal wealth given by (\ref{f0T0}) and  $(s,y)=(0,x_0)$. In other words,
$(\E_{0,x_0}[X^*(T)],\var_{0,x_0}(X^*(T)))$ lies {\it off} the efficient frontier (\ref{ef}).
Details are left to interested readers.} In other words, the na\"ive policy (\ref{PII}) takes more risk than it needs to - as dictated by the efficient frontier --  in order to achieve a higher expected terminal wealth (\ref{f0T0}).

To sum, in the current MV setting, a na\"ive policy is more risk-loving than the other types of polices while expecting higher terminal wealth. Although at every $(s,y)$ it tries to follow the pre-committed optimal policy, the actual policy turns out to be very different. It is MV inefficient and certainly not ``dynamically optimal" in any sense at any given $(s,y)$.

\subsubsection{The case $L(t,x)=xe^{k(T-t)}$}\label{Ltx}
We now consider the case when $L(t,x)=xe^{k(T-t)}$, where  $k>r$ (otherwise the problem (\ref{XueP})--(\ref{ft}) is trivial). The corresponding $f$ is
$f(t,T)=e^{k(T-t)},$
which satisfies Assumption \textbf{(A3)}.
Substituting this into (\ref{PII}) we obtain the na\"ive policy
\[
\bm{\pi}^*(t,x)=c_{na}(t)x,\;\;(t,x)\in [0,T]\times \R
\]
where the risky weight is
\begin{equation}\label{cnacase2}
c_{na}(t)=\frac{\bm{\pi}^*(t,x)}{x}
=\frac{b-r}{\sigma^2}\frac{e^{(k-r)(T-t)}-1}{1-e^{-\rho(T-t)}},
\;\;t\in [0,T].
\end{equation}

Next, it follows from  (\ref{GL}) that the corresponding
\begin{equation}\label{ge1}
\alpha(s,y)=\frac{\phi(s)}{y}
\end{equation}
where $\phi(s):=\frac{e^{\rho(T-s)}-1}{e^{k(T-s)}-e^{r(T-s)}}>0$.
Again, by Theorem 4.6 in \cite{BMZ14} we get the weak equilibrium policy  of the problem (\ref{BjorkP})--(\ref{ft0BjorkP}) to be
\[
\bm{\pi}_{we}(t,x)=c_{we}(t)x,
\]
where $c(t)\equiv c_{we}(t)$ uniquely solves
\begin{equation}\label{cwe2}
c(t)=\frac{b-r}{\phi(t)\sigma^2}\left[e^{-\int_t^T[r+(b-r)c(s)+\sigma^2c(s)^2]ds}+\phi(t) e^{-\int_t^T \sigma^2 c(s)^2 ds}-\phi(t)\right].
\end{equation}

Finally, by Theorem 1-i in \cite{HJ17}, the regular equilibrium policy for (\ref{XueP})--(\ref{ft}) is
\[
\bm{\pi}_{re}(t,x)=c_{re}(t)x,
\]
where
\begin{equation}
c_{re}(t):=\frac{k-r}{b-r},\;\;t\in [0,T].
\end{equation}

\begin{proposition}\label{compare_naive_2}
In the Black-Scholes market, if $L(t,x)=xe^{k(T-t)}$, then we have
	$$c_{we}(t)<c_{na}(t),\;\;c_{re}(t)<c_{na}(t),\;\;\forall t\in[0,T),$$
for any $k>r$.
\end{proposition}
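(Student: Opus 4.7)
The plan is to mirror the structure of the proof of Proposition \ref{compare_naive_1}, with the constant $\alpha$ replaced by the time-dependent $\phi(t)$ defined via (\ref{ge1}).

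First, for the regular equilibrium comparison $c_{re}(t)<c_{na}(t)$: Substituting $\rho=(b-r)^2/\sigma^2$, the desired inequality is algebraically equivalent to
\[
(k-r)\bigl(1-e^{-\rho(T-t)}\bigr) < \rho\bigl(e^{(k-r)(T-t)}-1\bigr), \qquad t\in[0,T).
\]
This follows immediately from the strict elementary bounds $1-e^{-x}<x<e^x-1$ valid for $x>0$: the left-hand side is strictly below $(k-r)\rho(T-t)$ while the right-hand side strictly exceeds the same quantity. The strictness is automatic for $t<T$.

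Second, for the weak equilibrium comparison $c_{we}(t)<c_{na}(t)$: I exploit the fixed-point equation (\ref{cwe2}) exactly as in the proof of Proposition \ref{compare_naive_1}. Since $\phi(t)>0$ and $e^{-\int_t^T \sigma^2 c(s)^2 ds}\leq 1$, the last two terms in the bracket satisfy $\phi(t)\bigl(e^{-\int_t^T \sigma^2 c(s)^2 ds}-1\bigr)\leq 0$, yielding
\[
c_{we}(t)\leq \frac{b-r}{\phi(t)\sigma^2}\,e^{-\int_t^T[r+(b-r)c(s)+\sigma^2 c(s)^2]\,ds}.
\]
Next, the discriminant of the quadratic $y\mapsto \sigma^2 y^2+(b-r)y+\rho$ equals $(b-r)^2-4\rho\sigma^2=-3(b-r)^2<0$, so $\rho+(b-r)c(s)+\sigma^2 c(s)^2>0$ for every $s\in[0,T)$, whence $r+(b-r)c(s)+\sigma^2 c(s)^2>r-\rho$. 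This gives the strict upper bound
\[
c_{we}(t) < \frac{b-r}{\phi(t)\sigma^2}\,e^{(\rho-r)(T-t)}.
\]
It remains to verify the identity $\frac{b-r}{\phi(t)\sigma^2}e^{(\rho-r)(T-t)}=c_{na}(t)$, after which the chain of (strict) inequalities closes.

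The main obstacle, as in Proposition \ref{compare_naive_1}, is the algebraic identity in the last step: the upper bound produced by the two relaxations turns out to equal $c_{na}(t)$ exactly, so the strictness of the comparison is supplied entirely by the discriminant argument. The computation reduces to factoring $e^{r(T-t)}$ from the numerator and $e^{-\rho(T-t)}$ from the denominator of $e^{(\rho-r)(T-t)}/\phi(t)$, which collapses $e^{(\rho-r)(T-t)}(e^{k(T-t)}-e^{r(T-t)})/(e^{\rho(T-t)}-1)$ into $(e^{(k-r)(T-t)}-1)/(1-e^{-\rho(T-t)})$; the assumption $k>r$ only enters to guarantee that $\phi(t)>0$ and that the problem is nontrivial.
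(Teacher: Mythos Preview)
Your proof is correct and follows essentially the same route as the paper. The $c_{we}$ argument is identical to the paper's (drop the nonpositive $\phi(t)(e^{-\int_t^T\sigma^2 c^2}-1)$ term, use the negative discriminant to replace the exponent by $r-\rho$, then unwind the definition of $\phi$), and your $c_{re}$ argument is the same elementary inequality $\frac{e^x-1}{1-e^{-y}}>\frac{x}{y}$ that the paper invokes, just split explicitly into the two bounds $1-e^{-y}<y$ and $e^x-1>x$.
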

\begin{proof}
It follows from (\ref{cwe2}) that
\begin{equation}
\begin{aligned}
c_{we}(t)\equiv c(t)&=\frac{b-r}{\phi(t)\sigma^2}\left[e^{-\int_t^T[r+(b-r)c(s)+\sigma^2c(s)^2]ds}+\phi(t) e^{-\int_t^T \sigma^2 c(s)^2 ds}-\phi(t)\right]\\
&\leq \frac{b-r}{\phi(t)\sigma^2}e^{-\int_t^T[r+(b-r)c(s)+\sigma^2c(s)^2]ds}\\
&< \frac{b-r}{\phi(t)\sigma^2}e^{-\int_t^T(r-\rho)ds}\\
&=\frac{b-r}{\phi(t)\sigma^2}e^{(\rho-r)(T-t)}\\
&=\frac{b-r}{\sigma^2}\frac{e^{(k-r)(T-t)}-1}{1-e^{-\rho(T-t)}}=c_{na}(t),\ \forall t\in[0,T),
\end{aligned}
\end{equation}
where we have utilized  (\ref{oi}) to get the second inquality and noted the definition of $\phi(\cdot)$ to obtain the second to the last equality.

Next, applying the general inequality
\[ \frac{e^x-1}{1-e^{-y}}>\frac{x}{y},\;\;\forall x>0,\;y>0,
\]
we deduce
\[c_{na}(t)
=\frac{b-r}{\sigma^2}\frac{e^{(k-r)(T-t)}-1}{1-e^{-\rho(T-t)}}
>\frac{b-r}{\sigma^2}\frac{k-r}{\rho}=\frac{k-r}{b-r}=c_{re}(t).
\]
The proof is complete.
\end{proof}

We can also show that the na\"ive policy is not MV efficient with respect to any initial $(s,y)$ in the current case. Because the analysis is similar to  that in the previous subsection, we omit the details here.

\section{Conclusions}\label{Naive_conclusions}

In this paper we define precisely and derive rigorously the policies implemented by a na\"ive agent, a notion originally put forth by \cite{SR56}, for a continuous-time Markowitz model that is intrinsically time inconsistent. Such an agent attempts to optimize at any given time but, since optimal policies depend on when and where one makes them in a time-inconsistent problem, in effect constantly changes his policies. Ironically, the policy a na\"ivet\'e actually executes may be anything but he originally desired. At any given time and state he sets an expected investment target and wants to achieve mean--variance efficiency but we show that his final policy ends up with a (much) higher target return and an even higher variance that overall becomes mean--variance {\it in}efficient. Moreover, na\"ive policies are universally riskier than their consistent planning counterparts.

Studying na\"ive behaviors in continuous-time problems is a nearly uncharted research area. From a behavioral economics perspective, it is fascinating to inquire and understand how an originally well-intended policy may go wrong or even go opposite when one insists on optimizing {\it all the time}. The definition of na\"ive policies and the approach to derive them in this paper is generalizable to other types of problems such as those with non-exponential discounting and probability weighting.
As such, we hope the paper has also set a stage for further study of these problems.


\newpage

\noindent\textbf{{\Large Appendices}}
\appendix

\section{Proof of Proposition \ref{normfinite}}\label{proof_normfinite}
    The main idea of the proof is to find a \emph{deterministic} function  $Y$ to bound $X_n^2$, which is stated in the following lemma.

\begin{lemma}\label{boundode}
Let $Y$ satisfying the following ODE
\begin{equation}
dY(s)=\left[R^*+(\gamma^{*})^2e^{-2\int_s^T r(v) dv }\rho(s)\right]Y(s)ds,\;s\in[0,T];\;\;
Y(0)=x_0,
\end{equation}
where
$$R^*:=\max\limits_{0\leq s\leq T}|2r(s)-\rho(s)|,\ \gamma^{*}:=\max\limits_{0\leq s\leq T}\gamma(s,T).$$
Then, we have, for every $k=0,1,...,2^n-1$,
$$\E[X_*(s;t_k)^2]\leq Y(s),\ \;s\in[t_k,t_{k+1}].$$
\end{lemma}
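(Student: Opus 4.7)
The plan is to derive an ODE for the second moment function $m_k(s) := \E[X_*(s;t_k)^2]$ on each subinterval $[t_k,t_{k+1}]$, exploit a crucial cancellation of cross terms, and then use induction on $k$ combined with a Gronwall-type comparison to dominate $m_k$ by $Y$.

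First, applying It\^o's formula to $X_*(\cdot;t_k)^2$ using the SDE (\ref{rec}), and noting the key identity $B(\sigma\sigma^\top)^{-1}\sigma\,\sigma^\top(\sigma\sigma^\top)^{-1}B^\top = \rho$, the $dt$ part of $d(X_*^2)$ is
\begin{equation*}
\bigl[2(r-\rho)X_*^2 + 2\gamma(t_k,T)\rho\, e^{-\int_t^T r}X_*\,\tilde Y\bigr]dt + \rho\bigl[X_* - \gamma(t_k,T)e^{-\int_t^T r}\tilde Y\bigr]^2 dt,
\end{equation*}
where $\tilde Y := X_*(t_k;t_{k-1})$. The cross term $X_*\tilde Y$ cancels exactly: the $+2\gamma\rho e^{-\int}X_*\tilde Y$ from $2X_*\cdot\text{drift}$ is killed by the $-2\gamma\rho e^{-\int}X_*\tilde Y$ from the quadratic variation. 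Taking expectations (after standard localization to discard the local martingale, which is valid since coefficients are uniformly bounded) gives the linear ODE
\begin{equation*}
\frac{d m_k(s)}{ds} = (2r(s)-\rho(s))\,m_k(s) + \gamma(t_k,T)^2\,\rho(s)\,e^{-2\int_s^T r(v)dv}\,m_{k-1}(t_k),\quad s\in[t_k,t_{k+1}],
\end{equation*}
with initial condition $m_k(t_k) = m_{k-1}(t_k)$ (and with the convention $m_{-1}(t_0):=x_0^2$, so that $Y(0)$ should be read as $x_0^2$ in the lemma).

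Next, I would proceed by induction on $k$. Since $Y$ satisfies a linear ODE whose coefficient $R^*+(\gamma^*)^2 e^{-2\int_s^T r}\rho(s)$ is nonnegative and whose initial value is positive, $Y$ is positive and non-decreasing on $[0,T]$. For the inductive step, assume $m_j(s)\leq Y(s)$ for all $s\in[t_j,t_{j+1}]$ and all $j<k$. Then $m_{k-1}(t_k)\leq Y(t_k)\leq Y(s)$ for $s\geq t_k$ by monotonicity. Combined with $m_k\geq 0$, $|2r-\rho|\leq R^*$, and $\gamma(t_k,T)^2\leq(\gamma^*)^2$, this gives
\begin{equation*}
\frac{d m_k(s)}{ds}\leq R^*\,m_k(s) + (\gamma^*)^2\,\rho(s)\,e^{-2\int_s^T r(v)dv}\,Y(s).
\end{equation*}

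Finally, setting $Z(s):=Y(s)-m_k(s)$ and subtracting, the shared $(\gamma^*)^2\rho e^{-2\int}Y$ terms cancel and we obtain $Z'(s)\geq R^*Z(s)$ with $Z(t_k)\geq 0$ (by the induction hypothesis at $t_k$). Multiplying through by $e^{-R^*s}$ shows $Z e^{-R^*s}$ is non-decreasing on $[t_k,t_{k+1}]$, hence $Z\geq 0$ throughout, closing the induction. The base case $k=0$ follows identically using $m_{-1}(t_0)=x_0^2\leq Y(0)$.

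The main obstacle is the algebraic cancellation of the cross term $X_*\tilde Y$: without it, the ODE for $m_k$ would contain an indefinite-sign term $\E[X_*\tilde Y]$ that one could only control via Cauchy--Schwarz, producing an inequality of the form $m_k'\leq a m_k + b\sqrt{m_k m_{k-1}(t_k)}$ whose comparison against the linear $Y$-ODE is much more delicate. Once the cancellation is observed, the remainder is routine Gronwall. A minor technical point, the justification of zero expectation of the stochastic integral, is handled by a standard stopping-time localization using the uniform boundedness of $r,B,\sigma$ granted by (A1).
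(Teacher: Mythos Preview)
Your proof is correct and follows the same core strategy as the paper: apply It\^o's formula to $X_*^2$, observe the cancellation of the cross term $X_*\tilde Y$ (which the paper uses implicitly in writing the clean ODE for the conditional second moment), and then combine a Gronwall-type comparison with induction on $k$. You also correctly flag that the initial condition in the lemma should be $Y(0)=x_0^2$ rather than $x_0$; the paper's own base-case step requires exactly this.

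The only organizational difference is that the paper works with the \emph{conditional} second moment $\E[X_*(t;t_k)^2\mid\mathcal{F}_{t_k}]$ and introduces two intermediate comparison processes $Z(\cdot;t_k)$ and $\bar Z(\cdot;t_k)$ (first replacing $2r-\rho$ by $R^*$, then using monotonicity of $Z$ to absorb the inhomogeneous term into a multiplicative coefficient), before taking unconditional expectations and inducting. You instead take unconditional expectations from the outset, obtain the ODE for $m_k$ directly, and use the monotonicity of $Y$ together with the induction hypothesis $m_{k-1}(t_k)\le Y(t_k)\le Y(s)$ to pass straight to the comparison with $Y$. Your route is somewhat more economical, avoiding the two auxiliary processes, but the underlying mechanism is the same.
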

\begin{proof}
By Assumptions {\bf (A1)--(A3)},  it is clear that $R^*<\infty$ and $\gamma^*<\infty$.

Recall $X_*(\cdot;t_k)$ satisfies the SDE (\ref{rec}) on $[t_k,t_{k+1}]$ for $k=0,1,...,2^n-1$. Applying It\^o's formula to $X_*(t;t_k)^2$ and then taking
conditional expectation on $\mathcal{F}_{t_k}$ we obtain
the ($\omega$-wise) ODE
{\small
\begin{equation}\label{ode0}
\begin{cases}
d\E[X_*(t;t_k)^2|\mathcal{F}_{t_k}]=\left\{(2r(t)-\rho(t))\E[X_*(t;t_k)^2|\mathcal{F}_{t_k}]
+\gamma(t_k,T)^2\rho(t)e^{-2\int_t^T r(v) dv }X_*(t_k;t_k)^2\right\}dt,\;\; t\in[t_k,t_{k+1}],\\
\E[X_*(t_k;t_k)^2|\mathcal{F}_{t_k}]=X_*(t_k;t_k)^2.
\end{cases}
\end{equation}}
Consider a new stochastic process $Z(\cdot;t_k)$ which satisfies the ODE on $[t_k,t_{k+1}]$ for $k=0,1,...,2^n-1$:
\begin{equation}\label{ode1}
\begin{cases}
dZ(t;t_k)=\left[R^*Z(t;t_k)+\gamma(t_k,T)^2\rho(t)e^{-2\int_t^T r(v) dv }X_*(t_k;t_k)^2\right]dt,\;\; t\in[t_k,t_{k+1}],\\
Z(t_k;t_k)=X_*(t_k;t_k)^2.
\end{cases}
\end{equation}
Because  $|2r(t)-\rho(t)|\leq R^*$, $t\in[0,T]$, a comparison theorem of ODEs yields
 \begin{equation}\label{comp0}
\E[X_*(t;t_k)^2|\mathcal{F}_{t_k}]\leq Z(t;t_k),\;\mbox{a.s.}, \;\;k=0,1,...,2^n-1.
\end{equation}

Now, we construct another  stochastic process $\bar Z(\cdot;t_k)$ on $[t_k,t_{k+1}]$ for $k=0,1,...,2^n-1$:
\begin{equation}\label{eq1}
\begin{cases}
d\bar Z(t;t_k)=\left[R^*+(\gamma^*)^2\rho(t)e^{-2\int_t^T r(v) dv }\right]\bar Z(t;t_k)dt,\;\; t\in[t_k,t_{k+1}],\\
\bar Z(t_k;t_k)=X_*(t_k;t_k)^2.
\end{cases}
\end{equation}

It follows from (\ref{ode1}) that $Z(t;t_k)$ increases in $t\in[t_k,t_{k+1}]$; hence $Z(t;t_k)\geq X_*(t_k;t_k)^2$ for $t\in[t_k,t_{k+1}]$. Then, we get
\begin{equation}\label{eq2}
\begin{aligned}
\frac{dZ(t;t_k)}{dt}&=R^*Z(t;t_k)+\gamma(t_k,T)^2\rho(t)e^{-2\int_t^T r(v) dv }X_*(t_k;t_k)^2\\
&\leq \left[R^*+\gamma(t_k,T)^2\rho(t)e^{-2\int_t^T r(v) dv }\right]Z(t;t_k)\\
&\leq \left[R^*+(\gamma^*)^2\rho(t)e^{-2\int_t^T r(v) dv }\right]Z(t;t_k).
\end{aligned}
\end{equation}
Comparing (\ref{eq1}) and (\ref{eq2}), we conclude  from the Grownwall inequality that 
\begin{equation}\label{comp1}
Z(t;t_k)\leq \bar Z(t;t_k), \;\mbox{a.s.},\; t\in[t_k,t_{k+1}],\;k=0,1,...,2^n-1.
\end{equation}

To finish the proof we use mathematical induction on $k$. When $k=0$, $t\in [0,t_1]$, it follows from (\ref{comp0}) and (\ref{comp1}) that
\begin{equation}
\E[X_*(t;0)^2]=\E[\E[X_*(t;0)^2|\mathcal{F}_0]]\leq \E[Z(t;0)]\leq \E[\bar Z(t;0)]=Y(t).
\end{equation}
Now, assume that when $k=m-1$, the following holds:
\begin{equation}\label{comp3}
\E[X_*(t;t_{m-1})^2]\leq Y(t),\;\; t\in [t_{m-1},t_{m}].
\end{equation}
By (\ref{comp0}) and (\ref{comp1}) we obtain
\begin{equation}\label{comp5}
\begin{aligned}
\E[X_*(t;t_m)^2]&=\E[\E[X_*(t;t_m)^2|\mathcal{F}_{t_m}]]\\
&\leq \E[Z(t;t_m)]\\
&\leq \E[\bar Z(t;t_m)],\;\;t\in [t_{m},t_{m+1}]
\end{aligned}
\end{equation}
where the initial value of $\E[\bar Z(\cdot;t_m)]$ on $[t_{m},t_{m+1}]$ is $\E[X_*(t_m;t_{m})^2]\equiv \E[X_*(t_m;t_{m-1})^2]$. However,  (\ref{comp3}) gives $\E[X_*(t_m;t_{m-1})^2]\leq Y(t_m)$, whereas  $\E[\bar Z(\cdot;t_m)]$ and $Y(\cdot)$ satisfy the same ODE on $[t_m,t_{m+1}]$. Thus $\E[\bar Z(t;t_m)]\leq Y(t)$ on $[t_m,t_{m+1}]$. Combining with (\ref{comp5}), we get the desired result.  
\end{proof}

We are now ready to  prove Proposition \ref{normfinite}.
By Lemma \ref{boundode}, we have
\begin{equation}\label{bound1}
\begin{aligned}
||X_n||^2&=\E\int_0^T X_n(s)^2 ds
=\sum\limits_{k=1}^{2^n}\int_{t_{k-1}}^{t_k}\E[X_*(s;t_{k-1})^2 ]ds\\
&\leq \sum\limits_{k=1}^{2^n}\int_{t_{k-1}}^{t_k}Y(s)ds=\int_0^T Y(s)ds<\infty.
\end{aligned}
\end{equation}

\section{Proof of Theorem \ref{theorem1}}\label{proof_theorem1}

 To ease notation  we use the following
\begin{equation}\label{nota1}
\begin{cases}
\gamma(t):=\gamma(t,T),\;A(t):=r(t)-\rho(t), \;
C(t):=e^{-\int_t^T r(v) dv}\rho(t), \\
D(t):=B(t)(\sigma(t)\sigma(t) ^\top )^{-1}\sigma(t)e^{-\int_t^T r(v) dv}, \;
F(t):=B(t)(\sigma(t)\sigma(t)^\top )^{-1}\sigma(t),
\end{cases}
\end{equation}
with which we rewrite the SDE (\ref{rec})  as
\begin{equation}\label{nota2}
\begin{cases}
dX_*(t;t_k)=\left[A(t)X_*(t;t_k)+\gamma(t_k)C(t)X_*(t_k;t_{k-1})\right]dt\\
\ \ \ +\left[-F(t)X_*(t;t_k)+\gamma(t_k)D(t)X_*(t_k;t_{k-1})\right]dW(t),\;\; t\in[t_k,t_{k+1}], \\
X_*(t_k;t_k)=X_*(t_k;t_{k-1}).
\end{cases}
\end{equation}

Denote
\[ A^*:=\max\limits_{t\in [0,T]}|A(t)|^2,\;C^*:=\max\limits_{t\in [0,T]}|C(t)|^2,\;
D^*:= \max\limits_{t\in [0,T]}||D(t)||^2,\;F^*:= \max\limits_{t\in [0,T]}||F(t)||^2,\]
which are all finite due to the boundedness assumptions in \textbf{(A1)} and \textbf{(A2)}.

In order to prove Theorem \ref{theorem1}, we need the following lemma.
\begin{lemma}\label{lemmaclose}
The process  $X_n$ defined by (\ref{pasting}) satisfies
$$\lim\limits_{n\to\infty}\max\limits_{k\in\{0,...,2^n-1\},s\in[t_k,t_{k+1}]}
\E|X_n(s)-X_n(t_k)|^2= 0.$$
\end{lemma}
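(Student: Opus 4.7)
The plan is to bound the increment $X_n(s) - X_n(t_k) = X_*(s;t_k) - X_*(t_k;t_k)$ directly from the SDE \eqref{nota2}. Fix $n$, $k\in\{0,\ldots,2^n-1\}$ and $s\in[t_k,t_{k+1}]$. Integrating \eqref{nota2} from $t_k$ to $s$ expresses this increment as the sum of a Lebesgue drift integral and an It\^o stochastic integral, with $X_*(u;t_k)$ appearing linearly inside and $\gamma(t_k)X_*(t_k;t_{k-1})$ playing the role of an $\mathcal{F}_{t_k}$-measurable constant.

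I would then apply the elementary inequality $(a+b)^2\le 2a^2+2b^2$ together with Cauchy--Schwarz on the drift integral and the It\^o isometry on the diffusion integral, obtaining
$$\E|X_n(s)-X_n(t_k)|^2 \le 2(s-t_k)\,\E\!\int_{t_k}^s\!\bigl|A(u)X_*(u;t_k)+\gamma(t_k)C(u)X_*(t_k;t_{k-1})\bigr|^2 du + 2\,\E\!\int_{t_k}^s\!\bigl\|-F(u)X_*(u;t_k)+\gamma(t_k)D(u)X_*(t_k;t_{k-1})\bigr\|^2 du.$$
A second application of $(a+b)^2\le 2a^2+2b^2$ inside each integrand separates the two summands. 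At this point I invoke the uniform bounds $A^*$, $C^*$, $D^*$, $F^*$, $\gamma^*$ (all finite by Assumptions \textbf{(A1)--(A3)}) together with the key estimate supplied by Lemma \ref{boundode}, namely $\E[X_*(u;t_k)^2]\le Y(u)$ and $\E[X_*(t_k;t_{k-1})^2]\le Y(t_k)$, where $Y$ is continuous and hence bounded on $[0,T]$ by some constant $Y^*<\infty$.

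Combining these estimates yields a bound of the form $\E|X_n(s)-X_n(t_k)|^2 \le K_1(s-t_k)^2 + K_2(s-t_k)$, where $K_1,K_2$ depend only on $A^*,C^*,D^*,F^*,\gamma^*,Y^*$ and, crucially, are independent of $n$, $k$ and $s$. Since $s-t_k\le T/2^n$ uniformly, taking the maximum over $k\in\{0,\ldots,2^n-1\}$ and $s\in[t_k,t_{k+1}]$ and then letting $n\to\infty$ delivers the desired conclusion.

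The only point requiring care is uniformity of all the constants in $k$; this is exactly what Lemma \ref{boundode} provides, by furnishing a single deterministic majorant $Y$ that dominates the second moments of every piece $X_*(\cdot;t_k)$ simultaneously. Beyond this, the argument is routine bookkeeping with standard linear-SDE estimates, so I do not foresee a substantial obstacle.
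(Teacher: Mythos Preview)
Your proposal is correct and follows essentially the same approach as the paper's proof: split the increment into drift and diffusion, apply Cauchy--Schwarz and It\^o's isometry respectively, bound the integrands via $A^*,C^*,D^*,F^*,\gamma^*$ and the uniform second-moment estimate from Lemma~\ref{boundode}, and conclude from $s-t_k\le T/2^n$. The only cosmetic difference is that the paper exploits the monotonicity of $Y$ to use $Y(T)$ directly rather than introducing a separate constant $Y^*$.
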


\begin{proof}
For $s\in[t_k,t_{k+1}]$, we bound the term $\E|X_n(s)-X_n(t_k)|^2$ as follows:
\begin{equation}
\begin{aligned}
&\E|X_n(s)-X_n(t_k)|^2=\E|X_*(s;t_k)-X_*(t_k,t_{k-1})|^2\\
&\leq 2\E\left[\int_{t_k}^s \left(A(t)X_*(t;t_k)+\gamma(t_k)C(t)X_*(t_k;t_{k-1})\right)dt\right]^2 \\
&+2\E\left[\int_{t_k}^s \left(-F(t)X_*(t;t_k)+\gamma(t_k)D(t)X_*(t_k;t_{k-1})\right)dW(t)\right]^2.
\end{aligned}
\end{equation}
For bounding the first term on the right side of the above,  we have by the Cauchy--Schwartz inequality
\begin{equation}\label{diffsquare}
\begin{aligned}
&\E\left[\int_{t_k}^s \left(A(t)X_*(t;t_k)+\gamma(t_k)C(t)X_*(t_k;t_{k-1})\right)dt\right]^2 \\
&\leq (s-t_k)\int_{t_k}^s \E\left|A(t)X_*(t;t_k) + \gamma(t_k)C(t)X_*(t_k;t_{k-1})\right|^2dt\\
&\leq (s-t_k)\int_{t_k}^s 2\E|A(t)X_*(t;t_k)|^2 + 2\E|\gamma(t_k)C(t)X_*(t_k;t_{k-1})|^2dt\\
&\leq (s-t_k)\int_{t_k}^s \left(2A^*\E|X_*(t;t_k)|^2 + 2\gamma^*C^*\E|X_*(t_k;t_{k-1})|^2\right)dt\\
&\leq (s-t_k)\int_{t_k}^s (2A^*+ 2\gamma^*C^*)Y(T)dt = (2A^*+ 2\gamma^*C^*)(s-t_k)^2Y(T),
\end{aligned}
\end{equation}
where  the last inequality follows from Lemma \ref{boundode} and the fact that $Y(s)$ is increasing in $s\in[0,T]$.

For the second term, by virtue of It\^o's isometry, we similarly have
\begin{equation}
\E\left[\int_{t_k}^s \left(-F(t)X_*(t;t_k)+\gamma(t_k)D(t)X_*(t_k;t_{k-1})\right)dW(t)\right]^2\leq (2\gamma^*D^*+ 2F^*)(s-t_k)Y(T).
\end{equation}
Combining the above, we obtain
\begin{equation}
\begin{aligned}
\E|X_n(s)-X_n(t_k)|^2&\leq 4(s-t_k)(A^*+\gamma^*C^* + \gamma^*D^* + F^*)Y(T), \;s\in[t_k,t_{k+1}].
\end{aligned}
\end{equation}
Thus,
$$\max\limits_{k\in\{0,...,2^n-1\},s\in[t_k,t_{k+1}]}\E[X_n(s)-X_n(t_k)]^2\leq \frac{4T}{2^n}(A^*+\gamma^*C^* + \gamma^*D^* + F^*)Y(T)\to 0$$
as $n\to\infty$.
\end{proof}

Because $X_n \to X\ \text{weakly in $L^2_{\mathcal{F}}([0,T];\mathbb{R})$}$, it follows from
Mazur's lemma that for each integer $n\geq 1$, there exists a positive integer $N(n)$ and a convex combination $V_n:=\sum_{k=n}^{N(n)} a_{k,n} X_k$, where $a_{k,n}\geq0$ and $\sum_{k=n}^{N(n)} a_{k,n}=1$, such that
\begin{equation}
V_n \to X\ \text{strongly in $L^2_{\mathcal{F}}([0,T];\mathbb{R})$}.
\end{equation}
By the definition of $V_n$, it satisfies the SDE
\begin{equation}
\begin{cases}
dV_n(t)=[A(t)V_n(t)+C(t)U_n(t)]dt+[-F(t)V_n(t)+D(t)U_n(t)]dW(t),\\
V_n(0)=x_0,
\end{cases}
\end{equation}
where
$$U_n(t):=\sum_{k=n}^{N(n)} a_{k,n}[\gamma(m_{t,k})X_k(m_{t,k})],\ m_{t,k}:=\frac{N}{2^k}T \mbox{ when } \frac{N}{2^k}T\leq t< \frac{N+1}{2^k}T \mbox{ for some }N\in \mathbb{N}.$$

Consider the linear SDE
\begin{equation}
\begin{cases}
dZ(t)=[A(t)X(t)+C(t)\gamma(t)X(t)]dt+[-F(t)X(t)+D(t)\gamma(t)X(t)]dW(t), \\
Z(0)=x_0.
\end{cases}
\end{equation}
We now prove that
\begin{equation}\label{zstrong}
\begin{aligned}
\lim\limits_{n\to\infty}\int_0^T \E|V_n(t)-Z(t)|^2 dt = 0.
\end{aligned}
\end{equation}
To this end, we first analyze $V_n(t)-Z(t)$. We have
\begin{equation}
\begin{aligned}
&V_n(t)-Z(t)=\int_0^t \left[ A(u)(V_n(u)-X(u))+C(u)(U_n(u)-\gamma(u)X(u))\right]du\\
&+\int_0^t \left[ -F(u)(V_n(u)-X(u))+D(u)(U_n(u)-\gamma(u)X(u))\right] dW(u)\\
&=:Q_{1,n}(t)+Q_{2,n}(t).
\end{aligned}
\end{equation}
As a result,
\begin{equation}\label{op1}
\int_0^T \E|V_n(t)-Z(t)|^2 dt
\leq 2\int_0^T \E|Q_{1,n}(t)|^2 dt+2\int_0^T \E|Q_{2,n}(t)|^2 dt.
\end{equation}
We proceed to analyze $\E[Q_{1,n}^2(t)] $ and $\E[Q_{2,n}^2(t)] $, respectively. First
\begin{equation}\label{Q1P}
\begin{aligned}
\E|Q_{1,n}(t)|^2&\leq T\E\int_0^t \left|A(u)(V_n(u)-X(u))+C(u)(U_n(u)-\gamma(u)X(u))\right|^2du\\
&\leq 2TA^* \int_0^t \E|V_n(u)-X(u)|^2du+2TC^*\int_0^t \E|U_n(u)-\gamma(u)X(u)|^2du.
\end{aligned}
\end{equation}
By the strong convergence of $V_n$ to $X$, the first term above converges to $0$ as $n\to\infty$.  For the second term,
\begin{equation}\label{Q121}
\begin{aligned}
&\int_0^t \E|U_n(u)-\gamma(u)X(u)|^2du\\
&=\int_0^t \E|U_n(u)-\gamma(u)V_n(u)+\gamma(u)V_n(u)-\gamma(u)X(u)|^2du\\
&\leq 2(\gamma^{*})^2\int_0^t \E|V_n(u)-X(u)|^2du+2\int_0^t \E\left|\sum\limits_{k=n}^{N(n)} a_{k,n}\left[\gamma(u)X_k(u)-\gamma(m_{u,k})X_k(m_{u,k})\right]\right|^2 du.
\end{aligned}
\end{equation}
Now,
\begin{equation}\label{Q1n}
\begin{aligned}
&\int_0^t \E\left|\sum\limits_{k=n}^{N(n)} a_{k,n}\left[\gamma(u)X_k(u)-\gamma(m_{u,k})X_k(m_{u,k})\right]\right|^2 du \\
&=\int_0^t \E|\sum\limits_{k=n}^{N(n)} a_{k,n}(\gamma(u)-\gamma(m_{u,k}))X_k(u)+a_{k,n}\gamma(m_{u,k})(X_k(u)-X_k(m_{u,k}))|^2 du\\
&\leq 2\int_0^t \left[\E|\sum\limits_{k=n}^{N(n)} a_{k,n}(\gamma(u)-\gamma(m_{u,k}))X_k(u)|^2+\E|\sum\limits_{k=n}^{N(n)}a_{k,n}\gamma(m_{u,k})(X_k(u)-X_k(m_{u,k}))|^2 \right]du\\
&\leq 2\int_0^t \left[\sum\limits_{k=n}^{N(n)} a_{k,n}\E|(\gamma(u)-\gamma(m_{u,k}))X_k(u)|^2+\sum\limits_{k=n}^{N(n)}a_{k,n}
\E|\gamma(m_{u,k})(X_k(u)-X_k(m_{u,k}))|^2\right] du,
\end{aligned}
\end{equation}
where the last inequality follows from the  convexity of the function $f(x)=x^2$. Because $\gamma(\cdot)$ is continuous on $[0,T]$, it is uniformly continuous. Hence, for any $\varepsilon>0$ there is $n_0\in \mathbb{N}$ such that $|\gamma(t)-\gamma(s)|\leq \varepsilon$ whenever $t,s\in [0,T]$ with $|t-s|\leq \frac{1}{2^{n_0}}T$. For $n\geq n_0$, we then have 
\begin{equation}\label{Q1P2}
\begin{aligned}
&2\int_0^t \left[\sum\limits_{k=n}^{N(n)} a_{k,n}\E|(\gamma(u)-\gamma(m_{u,k}))X_k(u)|^2+\sum\limits_{k=n}^{N(n)}a_{k,n}
\E|\gamma(m_{u,k})(X_k(u)-X_k(m_{u,k}))|^2\right] du\\
&\leq 2\int_0^t\left[ \varepsilon^2 \max\limits_{n\leq k\leq N(n)} \E|X_k(u)|^2+(\gamma^{*})^2\max\limits_{n\leq k\leq N(n)}\E|X_k(u)-X_k(m_{u,k})|^2 \right]du\\
&\leq 2\int_0^t \left[ \varepsilon^2 Y(u)+(\gamma^{*})^2\frac{4T}{2^n}(A^*+\gamma^*C^* + \gamma^*D^* + F^*)Y(T)\right]du\\
&\leq 2\left[\varepsilon^2 +(\gamma^{*})^2\frac{4T}{2^n}(A^*+\gamma^*C^* + \gamma^*D^* + F^*)\right]TY(T),
\end{aligned}
\end{equation}
where the second inequality is by Lemma \ref{boundode} and the proof of Lemma \ref{lemmaclose}. Taking $n\to \infty$ and then $\varepsilon\to 0$, we obtain
\begin{equation}\label{Q122}
\lim\limits_{n\to \infty} \int_0^t \E|\sum\limits_{k=n}^{N(n)} a_k^n\left(\gamma(u)X_k(u)-\gamma(m_{u,k})X_k(m_{u,k})\right)|^2 du = 0.
\end{equation}
Combining (\ref{Q1P}), (\ref{Q121}) and (\ref{Q122}) yields
\begin{equation}
\begin{aligned}
\lim\limits_{n\to\infty}\E|Q_{1,n}(t)|^2 = 0.
\end{aligned}
\end{equation}
Moreover,  according to the above analysis the bound of $\E|Q_{1,n}(t)|^2$  does not depend on $t$; thus  the dominated convergence theorem gives
\begin{equation}\label{Q1F}
\lim\limits_{n\to \infty}\int_0^T \E|Q_{1,n}(t)|^2 dt=\int_0^T \lim\limits_{n\to \infty}\E|Q_{1,n}(t)|^2 dt=0.
\end{equation}

Employing It\^o's isometry we can derive similarly
\begin{equation}\label{Q2F}
\lim\limits_{n\to \infty}\int_0^T \E|Q_{2,n}(t)|^2 dt=\int_0^T \lim\limits_{n\to \infty}\E|Q_{2,n}(t)|^2 dt=0.
\end{equation}

By plugging (\ref{Q1F}) and (\ref{Q2F}) into (\ref{op1}) we establish (\ref{zstrong}), namely, $V_n \to Z\ \text{strongly in $L^2_{\mathcal{F}}([0,T];\mathbb{R})$}$.
Thus, $Z(t,\omega)=X(t,\omega)$ except on a zero measure set in the space of $[0,T]\times \Omega$. It follows that $X$ satisfies the same SDE as $Z$ or, equivalently, $X$ satisfies (\ref{XSDE}).
Moreover, it is immediate that this wealth equation is generated by the feedback policy
(\ref{PI}).
The proof is complete.

\newpage

\bibliographystyle{myplainnat}
\bibliography{Naive_MV}

\end{document}